\documentclass[a4paper,10pt,preprintnumbers,floatfix,superscriptaddress,aps,pra,twocolumn,showpacs,notitlepage,longbibliography,nofootinbib]{revtex4-2}

\usepackage[dvipsnames]{xcolor}
\usepackage[utf8]{inputenc}
\usepackage{amsmath}
\usepackage{amsfonts}
\usepackage{amssymb}
\usepackage{amsthm}

\usepackage{graphicx}
\usepackage{lmodern}
\usepackage{sidecap} 
\usepackage{physics}
\usepackage{mathtools}
\usepackage{dsfont}

\usepackage[colorlinks]{hyperref}
\hypersetup{
    colorlinks  = true,
    citecolor   = PineGreen,
    linkcolor   = MidnightBlue,
    urlcolor    = PineGreen
}

\graphicspath{{figures/}{./}} 

\newtheorem{thm}{Theorem}
\newtheorem{lem}{Lemma}
\newtheorem{corr}{Corollary}
\newtheorem{conj}{Conjecture}

\def\tr{\operatorname{tr}}

\def\ket#1{|#1\rangle}

\def\ketbra#1#2{|#1\rangle\langle #2 |}
\def\H#1{H\left(#1\right)}

\def\p{\mathbf{p}}
\def\q{\mathbf{q}}

\def\c2{\ensuremath{C^{(2)}}}

\newcommand{\berta}{BCCRR}
\newcommand{\zuko}{RPZ$_2$}
\newcommand{\kmu}{\text{\tiny KMU}}

\begin{document}

\title{An entropic uncertainty principle for mixed states}
\date{\today}
\author{Antonio F. Rotundo}
\affiliation{Institut f\"{u}r Theoretische Physik, Leibniz Universit\"{a}t Hannover, Germany}
\author{René Schwonnek}
\affiliation{Institut f\"{u}r Theoretische Physik, Leibniz Universit\"{a}t Hannover, Germany}
\begin{abstract}
   The entropic uncertainty principle in the form proven by Maassen and Uffink yields a fundamental inequality that is prominently used in many places all over the field of quantum information theory. In this work, we provide a family of versatile generalizations of this relation. 
   Our proof methods build on a deep connection between entropic uncertainties and  interpolation inequalities for the doubly stochastic map that links probability distributions in two measurements bases.   
   In contrast to the original relation, our generalization also incorporates the von Neumann entropy of the underlying quantum state. 
   These results can be directly used to bound the extractable randomness of a source independent QRNG in the presence of fully quantum attacks, to certify entanglement between trusted parties, or to bound the entanglement of a system with an untrusted environment. 
\end{abstract}

\maketitle

\emph{Introduction.}
Uncertainty relations express the limits imposed by quantum mechanics on our ability to either prepare a state with given properties, or measure the properties of a state to a given precision \cite{mur,entromur,busch,blw}.  
The study of uncertainty inequalities dates back to some of the most famous works of the early days of quantum theory \cite{heisenberg,schrodinger,weyl1928,robertson,kennard} and has since then stayed a topic of ongoing research \cite{hirschmann,alberto,alberto2,colesfurrer,meandist,review1,entro9,gao}.
Besides being an attractive rabbit hole by its own \cite{rudnicki2014strong,abbot,carlos,entroaddi,konrad,ourentro,rastegin,zozor,xie,wang}, having the right uncertainty relation at hand often proved to be a powerful tool \cite{berta,schneeloch,ballesterwehner,spinsqueezing,riccardi2022entanglement,ana}, e.g.\ to build a worst-case model.
For example, uncertainty relations commonly serve as an easy-to-establish estimate that allows  determining, from measured data, properties like the presence of entanglement \cite{guhne,guhne2,hofmann,variances,bergh,brunner}, or the amount of extractable secure randomness \cite{berta,tomamichel,masini,ddorf,di1,di2}.

In quantum information theory, uncertainty is typically quantified in terms of entropies. 
The prototype uncertainty relation of this type is due to  an idea of Deutsch \cite{deutsch} and  a conjecture by Kraus \cite{kraus1987complementary} proven by Maassen and Uffink \cite{muf}: 
Let $X$ and $Y$ denote two projective measurements, then the possible values of the 
Shannon entropies of their measurement outcomes, $H(X)$ and $H(Y)$, (measured on copies of a state $\rho$) are constraint by
\begin{equation}\label{eq:muff}
H(X)+H(Y)\ge S(\rho)+ c_{\kmu}\, .
\end{equation}
Here $c_{\kmu}$ (see eq.\ \eqref{eq:kmuLimit}) is a non-negative constant that depends on the overlap between the measurement bases of $X$ and $Y$, and $S(\rho)$ is the von Neumann entropy of $\rho$.  

In this note, we establish a generalization of \eqref{eq:muff} to a family of entropic uncertainty relations of the form:
\begin{align}\label{eq:goal}
\lambda H(X)+\mu H(Y) \geq \alpha S(\rho)+ c_{XY}(\alpha,\lambda,\mu)\,,
\end{align}
with parameters $\mu,\lambda,\alpha\in[0,1]$. 
More precisely, we are interested in finding a constant $c_{XY}(\alpha, \lambda,\mu)$ such that \eqref{eq:goal} holds for all states $\rho$. Our main result Thm.\ \ref{thm:bound} provides this constant by drawing a connection to the norm of the doubly stochastic map that links the probability distributions in the $X$ and the $Y$ bases.

\begin{figure}[t]
    \centering
    \includegraphics[width=0.99\linewidth]{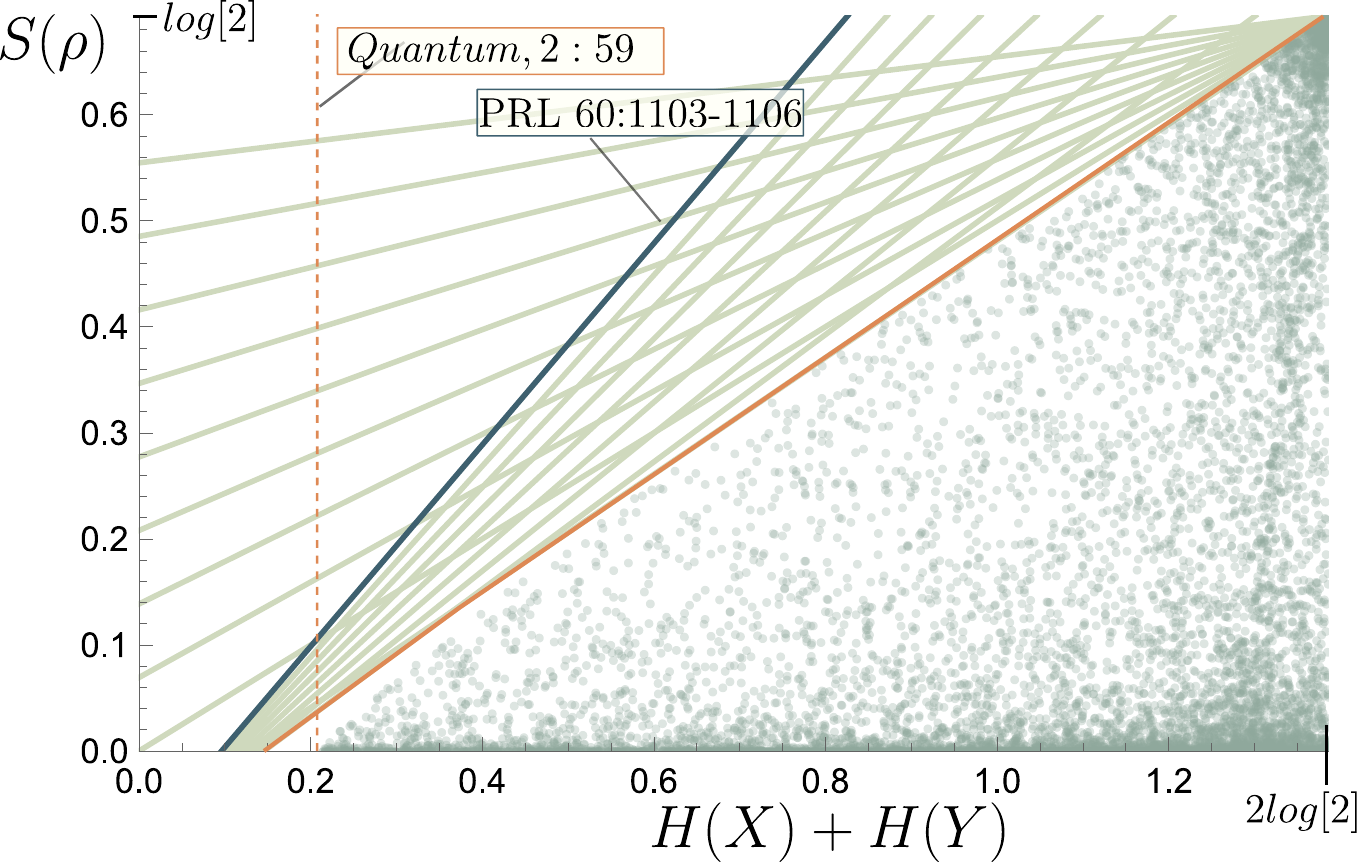}
    \caption{ 
    Allowed regions of $(S(\rho), H(X)+H(Y))$ tuples determined by random sampling over the statespace (gree dots), for $d=2$ and measurements $X$ and $Y$ with a relative angle of $17^\circ$. 
    The family of inequalities \eqref{eq:goal} determines linear bounds on this region (green lines). The uncertainty relation \eqref{eq:muff} from \cite{muf} corresponds to the blue line. Optimizing over our linear bounds (orange line) gives much stronger bounds.  
    }
    \label{fig:geometryConstraint} 
\end{figure}

The von Neumann entropy term on the r.h.s. of \eqref{eq:muff} was not present in the original formulation of this inequality. It was however noted by Frank \& Lieb \cite{hanstalk,frank} and Berta et al.\ \cite{berta}, that it can be added without changing the constant $c_{\kmu}$.  An interesting consequence, which is often overlooked, is that a state that minimizes the gap of this inequality will not necessarily be pure.  We extend this by including a weight $\alpha$ for the entropy term on the r.h.s.\ of \eqref{eq:goal}. The factor $\alpha$ sets the degree of mixedness that a state that minimizes the gap has. This goes from pure states that saturate \eqref{eq:goal} for $\alpha=0$ (see \cite{ourentro}) to the maximally mixed state that saturates \eqref{eq:goal} for $\alpha=2$ and $\mu+\lambda=1$ with $c_{XY}=0$. 
By this, we get a natural notion of a family of most certain (i.e.\ minimally uncertain) mixed states for measurements $X$
and $Y$.

An uncertainty relation like \eqref{eq:goal} can also be used to estimate the von Neumann entropy of an unknown state with given values of $H(X)$ and $H(Y)$, obtained e.g.\ from measurement data. This has various practical applications. For example, consider the scenario in which a local system $A$, with a reduced state $\rho_A$, has interacted with an uncharacterized environment $E$.  Here \eqref{eq:goal} becomes handy  for estimating correlations, since  $S(\rho_A)$ describes the corresponding entanglement entropy.
Building on this perspective, we demonstrate how to use our results for bounding the securely extractable randomness of a source independent quantum random number generator, for attesting entanglement between two trusted parties, and between two trusted parties and an uncharacterized environment.

\emph{Parameterized uncertainty relations.} Including weights, like $\alpha,\lambda,\mu$ in \eqref{eq:goal}, is a natural way of strengthening an existing relation. This has to be contrasted with many proposed \textit{improvements} of the Maassen and Uffink relation that merely add more and more $\rho$-dependent terms to the r.h.s. of \eqref{eq:muff}. 

One typical primordial question, preceding the use of an uncertainty relation, is to characterize the set of possible triples $\Omega_{XY}:=\{ (H(X),H(Y),S(\rho)) \}_\rho$ that could be attained by a not further specified state $\rho$. 
Our result \eqref{eq:goal} directly serves this purpose by giving bounds on arbitrary linear combination of $H(X)$, $H(Y)$, and $S(\rho)$.
Given a valid value of $c_{XY}$ for all parameters  $(\alpha,\mu,\lambda)$  allows for reconstructing a convex outer approximation to $\Omega_{XY}$ by performing a Legendre transformation \cite{amu}  of $c_{XY}$ with respect to $\alpha,\lambda,\mu$. 

Another typical use of an uncertainty relation is to bound the value of one quantity given access to the others. The estimation of $S(\rho)$, mentioned above, is an example of this. Here a given value of $c_{XY}$ for a whole parameter range directly pays off when we  use \eqref{eq:goal} to obtain the estimate 
\begin{equation}
    S(\rho) \leq \inf_{\lambda,\mu}  \lambda H(X) + \mu H(Y) - c_{XY}(1,\lambda,\mu).
\end{equation}
In general, this gives stronger estimates than \eqref{eq:muff}, which corresponds to evaluating the minimization above on the single point $\lambda=\mu=1$. 

The main result of this work is the following theorem, which provides a closed form for $c_{XY}$ in terms of operator norms: 
\begin{thm}\label{thm:bound}
For measurements $X$ and $Y$ given by projectors $\{X_1,\dots,X_{n_X}\}$ and $\{Y_1,\dots,Y_{n_Y}\}$, consider the $n_X\times n_Y$ matrix \c2 with entries $\c2_{ij}=\Tr(X_iY_j)$. For $0\le \lambda,\mu\le \alpha \le 1$, we have 
\begin{equation}\label{eq:thm}
c_{XY}(\alpha,\lambda,\mu)
\geq -\alpha\log\bigl\Vert \c2\bigr\Vert_{\frac{\alpha}{\mu} \rightarrow \frac{\alpha}{\alpha-\lambda}}\,.
\end{equation}
\end{thm}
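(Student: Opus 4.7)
My plan is to adapt the complex-interpolation (Riesz--Thorin / Stein) argument underlying the Maassen--Uffink bound, extending it to a three-parameter family that simultaneously absorbs the mixedness of $\rho$. The strategy proceeds in three stages: purify the state, prove a Rényi-type norm inequality on the doubly stochastic map $\c2$, and differentiate at the Shannon point.

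First I would pass to a purification $|\psi\rangle_{AB}$ of $\rho_A=\rho$, writing $p_i=\langle\psi|X_i\otimes\mathds{1}_B|\psi\rangle$, $q_j=\langle\psi|Y_j\otimes\mathds{1}_B|\psi\rangle$, and $S(\rho)=S(\rho_B)$. The core step is then a Rényi inequality of the schematic form
\begin{equation*}
\Bigl(\sum_j q_j^{\alpha/(\alpha-\lambda)}\Bigr)^{(\alpha-\lambda)/\alpha}\le \bigl\Vert\c2\bigr\Vert_{\frac{\alpha}{\mu}\to\frac{\alpha}{\alpha-\lambda}}\Bigl(\sum_i p_i^{\alpha/\mu}\Bigr)^{\mu/\alpha}\bigl(\Tr\rho^{\alpha}\bigr)^{\gamma},
\end{equation*}
which I would obtain by Stein interpolation of an analytic operator family $z\mapsto T_z$ built from $\c2$ and twisted by a factor $\rho^{z/2}$ acting on the $B$-side. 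The endpoints of the strip would be chosen so that one side reduces to a trivial bound (for instance $\Vert\c2\Vert_{1\to\infty}=\max_{ij}\Tr(X_iY_j)$) and the other side is saturated; the three-lines lemma then transports the bound to the intermediate point selected by $(\alpha,\lambda,\mu)$. In the last stage I would take logarithms and differentiate at the Shannon point via the identities
\begin{equation*}
H(\mathbf{p})=-\partial_\beta\log\Vert\mathbf{p}\Vert_\beta^\beta\Big|_{\beta=1},\qquad S(\rho)=-\partial_\beta\log\Tr\rho^{\beta}\Big|_{\beta=1},
\end{equation*}
so that the Rényi inequality converts into the desired Shannon / von Neumann inequality, with the weights $\lambda,\mu,\alpha$ produced by the chain rule applied to the composite exponents.

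The main obstacle I expect is in the second stage: engineering the analytic family so that, after interpolation, $\c2$ appears with exactly the indices $\alpha/\mu\to\alpha/(\alpha-\lambda)$ while the $\Tr\rho^{\alpha}$ factor carries precisely the exponent needed to yield the prefactor $\alpha$ on $S(\rho)$ after differentiation --- rather than the naive coefficient $1$ of the Berta et al.\ extension, which corresponds to the corner $\alpha=\lambda=\mu=1$. The constraint $0\le\lambda,\mu\le\alpha\le 1$ presumably pins down the admissible range inside the interpolation strip. Two sanity checks would guide the bookkeeping: taking $\alpha=\lambda=\mu=1$ must collapse the norm to $\Vert\c2\Vert_{1\to\infty}$ and recover the Maassen--Uffink constant $c_\kmu$; taking $\alpha=1$, $\lambda=\mu=\tfrac12$ on a mutually unbiased pair should give $\Vert\c2\Vert_{2\to 2}=1$ and reproduce the classical bound $\tfrac12\bigl(H(X)+H(Y)\bigr)\ge S(\rho)$.
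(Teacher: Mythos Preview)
Your proposal takes a genuinely different route from the paper, and the ``main obstacle'' you flag is not just bookkeeping --- it is a real gap that, as written, is not closed.

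The paper does \emph{not} use complex interpolation at all. Its proof runs as follows: (i) linearize all three entropies via the variational formulas $H(\p)=\inf_{\q}\{-\sum_i p_i\log q_i\}$ and $S(\rho)=\inf_\sigma\{-\tr(\rho\log\sigma)\}$, so that the optimal constant becomes an $\inf_\rho\sup_\sigma\inf_{\p,\q}$ of a linear functional; (ii) swap $\inf_\rho$ and $\sup_\sigma$ to get a lower bound; (iii) fix $\sigma$ to the Gibbs state $e^{-(\lambda A_X(\p)+\mu A_Y(\q))/\alpha}/N$, which collapses the $\rho$-minimization to $-\alpha\log\tr e^{-(\lambda A_X+\mu A_Y)/\alpha}$; (iv) apply Golden--Thompson to separate the two non-commuting exponents, producing $\sum_{ij}p_i^{\lambda/\alpha}q_j^{\mu/\alpha}\Tr(X_iY_j)$; (v) recognize the sup over $\p,\q$ of this bilinear form as the $\frac{\alpha}{\mu}\to\frac{\alpha}{\alpha-\lambda}$ norm of $\c2$ via H\"older. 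The Golden--Thompson step is the crux: it is precisely what converts a genuinely quantum object (the exponential of a sum of non-commuting observables) into the classical doubly-stochastic matrix $\c2$.

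Your interpolation strategy never explains how $\c2$ is supposed to enter. The Riesz--Thorin/Stein machinery underlying Maassen--Uffink acts on the \emph{unitary} $U$ at the amplitude level (where $b=Ua$ for pure states), and it is $\Vert U\Vert_{r\to s}$ that appears, not $\Vert\c2\Vert_{r\to s}$. The schematic Rényi inequality you wrote down,
\[
\|q\|_{\alpha/(\alpha-\lambda)}\le \bigl\Vert\c2\bigr\Vert_{\frac{\alpha}{\mu}\to\frac{\alpha}{\alpha-\lambda}}\|p\|_{\alpha/\mu}\,(\Tr\rho^\alpha)^\gamma,
\]
is not a consequence of interpolation on $\c2$, because for a general state there is no linear relation $q=\c2 p$ to interpolate --- that identity holds only when $\rho$ is diagonal in one of the bases. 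Twisting by $\rho^{z/2}$ on the purifying system does not obviously fix this: you would still need some mechanism (analogous to Golden--Thompson) to decouple the $X$- and $Y$-spectral projections before $\c2$ can appear. Without that mechanism the argument stalls at the ``engineering the analytic family'' step, and the flexible coefficient $\alpha$ on $S(\rho)$ has no evident origin.
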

The norm appearing in the theorem is defined by
\begin{equation}
\bigl\Vert \c2\bigr\Vert_{r\rightarrow s} \coloneqq \sup_{\phi\in{\mathbb{C}^{n_Y}}}
     \frac{\Vert C^{(2)}\phi\Vert_s}{\Vert\phi\Vert_r}\,,
\end{equation}
where $\norm{\cdot}_p$ denotes the usual $p$-norm. 

Plugging eq.\ \eqref{eq:thm} in eq.\ \eqref{eq:goal}, we find
\begin{equation}\label{eq:basicineq}
    \lambda H(X)+\mu H(Y) \geq \alpha S(\rho)-\alpha\log\bigl\Vert \c2\bigr\Vert_{\frac{\alpha}{\mu} \rightarrow \frac{\alpha}{\alpha-\lambda}}\,.
\end{equation}



\emph{How to evaluate the norm.}
In principle, the operator norm in the bound provided by \eqref{eq:basicineq} can be computed numerically with arbitrary precision. However, for large system sizes, this may turn out to be challenging in practise. Therefore, we provide some analytical results and a conjecture  that may drastically simplify this computation. 
From now on, we specialize to rank-1 projective measurements; in this case, we can associate orthonormal bases to $X$ and $Y$.
We denote by $d$ the number of projectors, i.e. the dimension of the Hilbert space. 

When the bases of $X$ and $Y$ are mutually unbiased (MUB), \c2 has all entries equal to $1/d$. Its norm is (Lemma 3 in \cite{additional})
\begin{equation}\label{eq:normMUB}
    \log\bigl\Vert\c2_{\text{\tiny MUB}}\bigr\Vert_{\frac{1}{\mu}\rightarrow \frac{1}{1-\lambda}} =(1-\lambda-\mu)\log{d}\,.
\end{equation}
One can show (Lemma \ref{lem:normExact} in \cite{additional}), that this equation in fact holds for any matrix $\c2$ as long as $\mu+\lambda\le1$.
In the limit $\mu=\lambda\rightarrow 1$, we recover the well-known result of Kraus, Maassen and Uffink (KMU) \cite{kraus1987complementary, muf}, 
\begin{equation}\label{eq:kmuLimit}
    \bigl\Vert\c2\bigr\Vert_{1\rightarrow \infty}= \max_{ij}\c2_{ij}\,.
\end{equation}

What we have learned about these norms is summarized in Fig.\ \ref{fig:normSummary}. 
\begin{figure}
    \includegraphics[width=0.99\linewidth]{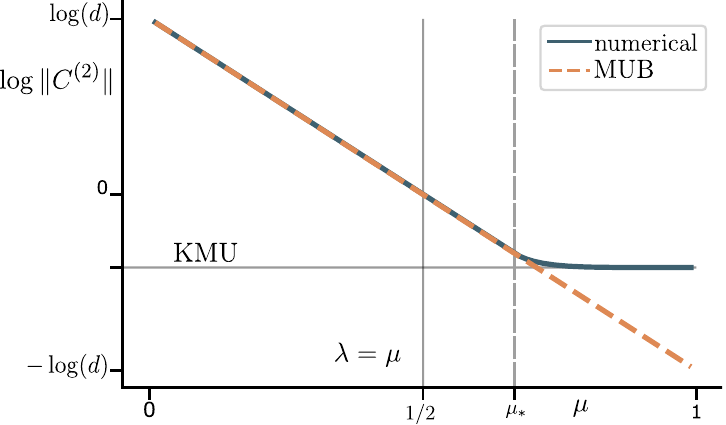}
    \caption{Typical behaviour of $\log\Vert\c2\Vert$ as a function of $\mu$, for $\mu=\lambda$ (blue). It follows eq.\ \eqref{eq:normMUB} (orange) up to a critical value $\mu_*$, and then asymptotes to the KMU limit, eq.\ \eqref{eq:kmuLimit}. To generate the plot, we consider an example \c2 in $d=2$ (eq.\ \eqref{eq:2dDoublyStoc} with $\theta = \pi/6$), but a similar behaviour is observed in higher dimensions. 
    }
    \label{fig:normSummary}
\end{figure}
We observe an interesting fact: the norm of \c2 seems to follow the MUB result beyond $\mu=1/2$, up to a critical value $\mu_*$. One can check that a similar behaviour is also found in higher dimensions, and for $\mu\neq \lambda$. This leads us to formulate the following conjecture. 
\begin{conj}\label{conjecture}(Extended MUB regime).
Let \c2 be a doubly stochastic matrix. Its norm is equal to that of $\c2_{\text{\tiny MUB}}$, i.e.\ it is given by eq.\ \eqref{eq:normMUB}, as long as 
\begin{equation}\label{eq:conjecture}
    \frac{1-\mu}{\mu}\frac{1-\lambda}{\lambda}\ge\sigma_2^2\,,
\end{equation}
where $\sigma_2$ is the second largest singular value of \c2.
\end{conj}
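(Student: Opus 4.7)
\emph{Proof proposal (Conjecture~\ref{conjecture}).}
The plan is to cast Conjecture~\ref{conjecture} as the optimisation problem of maximising $\|\c2\phi\|_s/\|\phi\|_r$ with $r=1/\mu$, $s=1/(1-\lambda)$, and to prove that the uniform vector $\phi_0=\mathbf{1}$ is a global positive maximiser exactly when \eqref{eq:conjecture} holds. Because $\c2$ has non-negative entries, the entrywise inequality $|\c2\phi|_i\le\c2|\phi|_i$ restricts attention to $\phi\ge 0$. The doubly stochastic conditions $\c2\mathbf{1}=(\c2)^T\mathbf{1}=\mathbf{1}$ automatically make $\phi_0$ a critical point of the Lagrangian $\|\c2\phi\|_s^s-\beta\|\phi\|_r^r$ for every $(\mu,\lambda)$, at value $d^{1/s-1/r}=d^{1-\mu-\lambda}$, which already matches the MUB value of \eqref{eq:normMUB}.

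Next I would perform a second-variation analysis. Writing $\phi=\mathbf{1}+\epsilon\psi$ with $\sum_i\psi_i=0$ to enforce the first-order tangency, and $\eta=\c2\psi$ (also of zero mean since $\c2$ preserves $\mathbf{1}$), a direct expansion yields
\begin{equation*}
\frac{\|\c2\phi\|_s}{\|\phi\|_r}=d^{1-\mu-\lambda}\left(1+\frac{\epsilon^2}{2d}\bigl[(s-1)\|\eta\|_2^2-(r-1)\|\psi\|_2^2\bigr]+O(\epsilon^3)\right).
\end{equation*}
Since $\psi\perp\mathbf{1}$ and $\mathbf{1}/\sqrt{d}$ is the leading singular vector of $\c2$, one has $\|\eta\|_2\le\sigma_2\|\psi\|_2$; the quadratic bracket is therefore non-positive exactly when $(s-1)\sigma_2^2\le r-1$, which rearranges to $\sigma_2^2\le(1-\mu)(1-\lambda)/(\mu\lambda)$, i.e.\ the condition \eqref{eq:conjecture}. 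This shows that $\phi_0$ is at least a local maximiser throughout the regime of the conjecture, and identifies the curve in \eqref{eq:conjecture} as precisely the locus where the Hessian first degenerates, which is a strong structural hint that the bound is sharp.

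The hard part is upgrading this local statement to a global one, since the functional is not concave in general. The first route I would attempt is to treat the Euler--Lagrange equation $(\c2)^T(\c2\phi)^{s-1}=\beta\,\phi^{r-1}$ as a nonlinear Perron--Frobenius problem for the map $T\phi=((\c2)^T(\c2\phi)^{s-1})^{1/(r-1)}$ and to show that under \eqref{eq:conjecture} $T$ is a strict contraction in Hilbert's projective metric on the positive cone, forcing $\phi_0$ to be the unique positive fixed point and hence the global maximiser. A complementary line is an interpolation argument: the identity is already known for $\mu+\lambda\le 1$ by Lemma~\ref{lem:normExact}, so the log-convexity of $\|\c2\|_{r\to s}$ in $(1/r,1/s)$ provided by Riesz--Thorin, paired with the matching lower bound coming from $\phi_0$, could be used to propagate the identity across the curve \eqref{eq:conjecture}. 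A third possibility is to exploit the rank-one splitting $\c2=\mathbf{1}\mathbf{1}^T/d+\tilde C$ with $\|\tilde C\|_{2\to 2}=\sigma_2$, decompose $\phi=\bar\phi\,\mathbf{1}+\tilde\phi$, and control the cross-terms via H\"older so as to lift the Hessian estimate to finite deviations. The boundary case $\sigma_2^2=(1-\mu)(1-\lambda)/(\mu\lambda)$ is a further subtlety, since the second variation degenerates there and higher-order terms must be tracked. I expect the Perron--Frobenius contraction route to be the most natural attack, with the identification of a contraction ratio compatible with \eqref{eq:conjecture} being the principal technical hurdle.
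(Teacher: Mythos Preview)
The statement you are addressing is a \emph{conjecture} in the paper, not a theorem: the authors do not prove it. What the paper offers as ``evidence'' (their Lemma~\ref{lem:localMax}) is exactly your second-variation computation, showing that $\mathbf{1}$ is a local maximiser of $\|\c2\phi\|_s/\|\phi\|_r$ precisely when \eqref{eq:conjecture} holds, supplemented by numerical checks in $d=2$. The paper states explicitly: ``For $\mu+\lambda>1$, we don't have a general argument why $\mathbf{1}$ should be also the global maximum. This is why in the main text, we have only a conjecture and not a theorem.''

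Your local analysis is correct and coincides with the paper's. You also correctly diagnose the gap---upgrading local to global---and sketch three attack lines that go beyond what the paper attempts. None of them, however, is a proof. A brief assessment: (i) the nonlinear Perron--Frobenius / Hilbert-metric route is natural, but establishing a contraction ratio that matches \eqref{eq:conjecture} \emph{is} the problem, not a detail; (ii) the Riesz--Thorin route is problematic as stated, since interpolation only yields an upper bound along a segment between two endpoints where the norm is already known, and the only such region you have is $\mu+\lambda\le1$---with no second anchor beyond that line you cannot propagate into $\mu+\lambda>1$ where the conjecture has content; (iii) the rank-one splitting plus H\"older is a reasonable direct attack but runs into the same non-concavity obstruction that makes the problem hard. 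In short, your proposal reproduces the paper's partial evidence and accurately locates what remains open, but does not close it---nor does the paper.
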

\noindent For $\mu=\lambda$ we can simplify the condition above to 
\begin{equation}\label{eq:qudit_mustar}
    \mu \le \frac{1}{1+\sigma_2}\equiv\mu_*\,.
\end{equation}
 Strong evidence supporting this conjecture is given in \cite{additional}. It is easy to confirm this conjecture for qubits and check it pointwise, i.e.\ for fixed parameters, in higher dimensions. Below, we will consider different applications of eq.\ \eqref{eq:basicineq}, and use the analytical expression given by conjecture \ref{conjecture} to find optimal values for $\mu$ and $\lambda$. For critical applications, such as security proofs, one can then easily check the validity of the analytical expression, for that specific point, by numerically evaluating the norm. 
 






\emph{Comparison to existing uncertainty relations.}
In this section, we compare \eqref{eq:basicineq} with two other known EUR: that of \cite{berta}, which we denote \berta,
\begin{equation}
    H(X)+H(Y)\ge S(\rho) -\log c_1\,;
\end{equation}
and the inequality \zuko\ from \cite{rudnicki2014strong},
\begin{equation}
    H(X)+H(Y)\ge S(\rho) -\log[c_1C^2+c_2(1-C^2)]\,.
\end{equation}
Here, following \cite{rudnicki2014strong}, $c_1$ and $c_2$ denote the first and second largest elements of \c2, and $C\equiv(1+\sqrt{c_1})/2$. 

To compare with these inequalities, which give equal weights to $H(X)$ and $H(Y)$, we set $\mu=\lambda$. Using conjecture \ref{conjecture} and setting $\mu=\mu_*$ as in eq.\ \eqref{eq:qudit_mustar}, we find
\begin{equation}\label{eq:quditEUR}
    H(X)+H(Y)\ge (1+\sigma_2)S(\rho) +(1-\sigma_2)\log{d}\,.
\end{equation}
The entropy term in eq.\ \eqref{eq:quditEUR} is always larger than in both \berta\ and \zuko. Therefore, in our comparison, we consider only the second, state-independent, term. 


We first consider $d=2$. In this case, the most general \c2 matrix is given by 
\begin{equation}\label{eq:2dDoublyStoc}
    \c2 = \begin{pmatrix}
        \cos^2{\theta}& \sin^2{\theta}\\
        \sin^2{\theta} & \cos^2{\theta}
    \end{pmatrix}\,, \quad\theta\in[0,\pi/4]\,,
\end{equation}
where $\theta=0$ corresponds to $X=Y$, and $\theta=\pi/4$ to the MUB case. 
\begin{figure}
\includegraphics[width=0.95\linewidth]{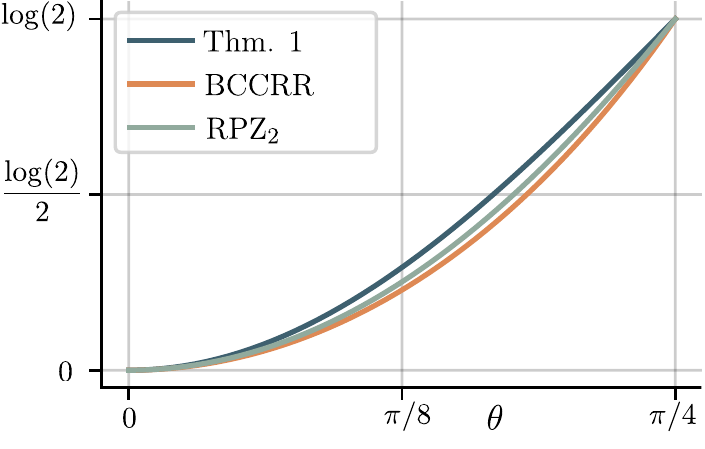}
    \caption{Comparison of the state independent bound provided by \berta\, \zuko, and eq.\ \eqref{eq:quditEUR} for $d=2$. The angle $\theta$ parametrizes \c2, as in eq.\ \eqref{eq:2dDoublyStoc}.}
    \label{fig:qubitComp}
\end{figure}
The bounds provided by \berta, \zuko, and eq.\ \eqref{eq:quditEUR} are compared in Fig.\ \ref{fig:qubitComp}.
The bounds are equivalent for $\theta=0$ and $\theta=\pi/4$, while in between our bound is stronger.

For $d>2$, the matrices \c2 have too many parameters, and we cannot scan them all, as we did in Fig.\ \ref{fig:qubitComp}. Instead, we compare our bound, eq.\ \eqref{eq:quditEUR}, to \berta\ and \zuko\ for some random \c2, generated by setting $\c2_{ij}=\abs{U_{ij}}^2$, where $U$ is a Haar random unitary. 
In Fig.\ \ref{fig:bad_c2_d}, we plot the percentage of \c2 for which our bound is better than \berta\ or \zuko\ as a function of $d$. 
\begin{figure}
    \centering
    \includegraphics[width = 0.9\linewidth]{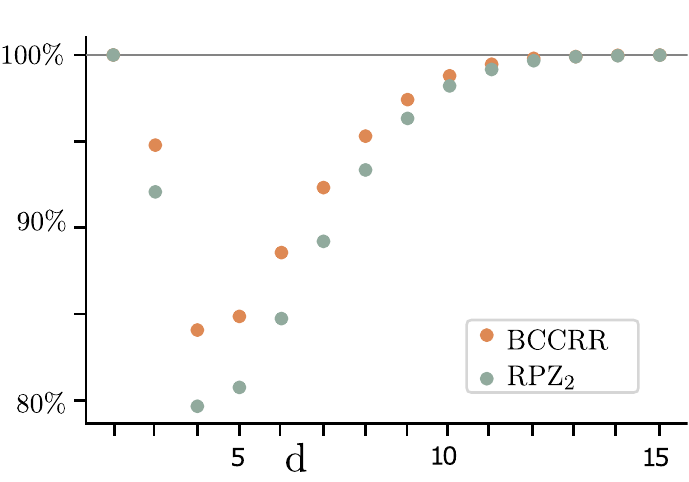}
    \caption{Percentage of \c2 for which our bound is better than BCCRR or RPZ, as a function of $d$. We have used a sample of $10^5$ random \c2.}
    \label{fig:bad_c2_d}
\end{figure}
As expected, for $d=2$, our bound is always at least as good as both \berta\ and \zuko. The percentage of \c2 for which our bound is better decreases for $d=3,4$, but then starts increasing. For $d\gg 1$, our bound is equivalent or better with probability that approaches 1.  



\emph{Practical applications.}
From the comparison above, we know that eq.\ \eqref{eq:basicineq}, for $\mu=\lambda$,  provides stronger constraints compared to other EUR for many observables $X$ and $Y$ (eq.\ \eqref{eq:quditEUR}). Therefore, it can be useful in all applications of other EUR. However, imposing $\mu=\lambda$ we loose part of the power of eq.\ \eqref{eq:basicineq}, i.e.\ that of giving different weights to $H(X)$ and $H(Y)$. This is often beneficial in practical applications, as we show  
in three examples: bounding extractable randomness, entanglement detection, and bounding entanglement with an eavesdropper. 

\emph{Bounding extractable randomness.}
Quantum random number generators will likely be one of the first competing market-ready quantum devices. However, proving their security without imposing too strong assumptions is still under development. A promising class of protocols are source-independent random number generators. Their basic security mechanism can be traced back to the use of the uncertainty relation \eqref{eq:muff}. Our results can be directly used to get stronger bounds on the extractable randomness. 

In a basic protocol \cite{cao}, we are provided with a state $\rho$, emitted by an untrusted source, from which we want to extract random numbers. 
We are allowed to perform measurements $X$ and $Y$. By convention, the $X$ measurement will be used for generating a secret number. The entropy $H(Y)$ of the other measurement, in this context usually referred to as phase-error rate, will be used to certify properties of $\rho$. We consider fully quantum attacks, modeled by granting an adversary $E$ full access to the purification of $\rho$. 

It was shown in \cite{rotem,rennerphd} that the single-shot quantity that has to be bounded for estimating the rate of securely extractable randomness (both asymptotically and finite) is given by the conditional entropy $H(X\vert E)$, which in our case can equivalently \cite{tan} be computed by $H(X)-S(\rho)$. 
Using \eqref{eq:basicineq}, we can bound this expression as 
\begin{equation}\label{eq:randomness}
    H(X\vert E)\ge \max_{\mu,\lambda}\,(1-\lambda)H(X)-\mu H(Y)-\log\Vert\c2\Vert\,.
\end{equation}
The main advantage of using eq.\ \eqref{eq:basicineq} is that we can optimize over $\mu$ and $\lambda$ to obtain improved bounds compared to other symmetric EUR. The optimal value can be found by numerically evaluating the norm. 

Using conjecture \ref{conjecture}, we can avoid numerics and get an analytical bound. Let $\Delta_X\equiv \log{d}-H(X)$, $\Delta_Y\equiv\log{d}-H(Y)$, then one can show that, as long as \eqref{eq:conjecture} is satisfied, the optimal bound is \cite{additional}
\begin{equation}\label{eq:randomConj}
    H(X\vert E)\ge \begin{dcases}
        \frac{(\sqrt{\Delta_Y}-\sigma_2\sqrt{\Delta_X})^2}{1-\sigma_2^2}&\quad \gamma\ge \sigma_2\,,\\
        \Delta_Y - \Delta_X&\quad \gamma<\sigma_2\,,
    \end{dcases}
\end{equation}
where $\gamma\equiv\sqrt{\Delta_X/\Delta_Y}$. Notice that wlog we can assume that $\gamma\le 1$ (swap $X$ and $Y$ if this is not the case). 

\emph{Entanglement detection.}
Consider now a bipartite state $\rho_{AB}$, shared between two parties, $A$ and $B$, that can perform local measurements, $X_{AB}=X_A\otimes X_B$, $Y_{AB}=Y_A\otimes Y_B$, and exchange classical information. 
To use eq.\ \eqref{eq:basicineq} for detecting entanglement, we follow \cite{riccardi2022entanglement}; one can show that $\rho_{AB}$ is entangled if \cite{additional}
\begin{equation}\label{eq:entDet2}
    \lambda H(X_{AB})+\mu H(Y_{AB}) < S_{max}-\log\bigl\Vert\c2_A\bigr\Vert\bigl\Vert\c2_B\bigr\Vert\,.
\end{equation}
Here $S_{max}=\max(S(\rho_A),S(\rho_B))$. 

We can use conjecture \ref{conjecture} to find a condition that is easier to treat analytically. Let $\sigma_2=\max(\sigma_{2,A}, \sigma_{2,B})$, where $\sigma_{2,A}$ and $\sigma_{2,B}$ are the second largest singular values of $\c2_A$ and $\c2_B$. Then, as long as $\mu, \lambda$ obey eq.\ \eqref{eq:conjecture}, we find that $\rho_{AB}$ is entangled if
\begin{equation}\label{eq:entWitnessMUB}
    \lambda \Delta_X+\mu \Delta_Y>\log{d}-S_{max}\,.
\end{equation}
Here, $\Delta_X\equiv \log{d}-H(X_{AB})$, $\Delta_Y\equiv\log{d}-H(Y_{AB})$, and $d=d_Ad_B$ is the total size of the Hilbert space. 
Since both $\Delta_X$ and $\Delta_Y$ are non-negative, the best we can do is to take $\mu$ and $\lambda$ as big as possible. However, the constraints \eqref{eq:conjecture} prevents us from increasing $\mu$ and $\lambda$ independently.
The optimal value of $(\mu,\lambda)$ for given $\Delta_{X,Y}$ is \cite{additional}
\begin{equation}\label{eq:optimalMuLambda}
    \mu=\frac{1-\sigma_2\gamma}{1-\sigma_2^2}\,, \quad \lambda=\frac{1-\sigma_2/\gamma}{1-\sigma_2^2}\,,\quad \gamma\equiv\sqrt{\frac{\Delta_X}{\Delta_Y}}\,,
\end{equation}
if $\sigma_2\le\gamma\le1/\sigma_2$. When $\gamma  <\sigma_2$, the optimal choice is $(\mu,\lambda)=(1,0)$, and, when $\gamma  >1/\sigma_2$, it is $(\mu,\lambda)=(0,1)$. 
In the original notation, in terms of $H(X)$ and $H(Y)$, we find that, for $\sigma_2\le\gamma\le 1/\sigma_2$, $\rho_{AB}$ is entangled if
\begin{equation}\label{eq:entWitness}
\begin{split}
    H(X_{AB})+H(Y_{AB})<&(1-\sigma_2^2)S_{max}+(1+\sigma_2^2)\\
    &\cdot\log{d}-2\sigma_2\sqrt{\Delta_X\Delta_Y}\,.
\end{split}
\end{equation}
We conclude that the freedom of keeping $\mu\neq\lambda$ generally helps. However, notice that for MUB $\sigma_2=0$, and we can take $\mu=\lambda=1$. 

As an example, we consider Werner states \cite{werner1989quantum}
\begin{equation}
W = \frac{1}{d(d^2-1)}[(d-\Phi)\mathds{1}+(d\Phi-1)V],
\end{equation}
where $V=\sum_{ij}\vert ij\rangle\langle ji\vert$ is the swap operator, and $\Phi\in[-1,1]$. These states are entangled for $\Phi<0$. We consider Werner states of two qubits, $d_A=d_B=2$, take $X_{A,B}$ in the computational basis, and $Y_{A}$,$Y_{B}$ in bases rotated by angles $\theta_{A}$, $\theta_{B}$. 
In Fig.\ \ref{fig:werner}, we plot for which values of $(\theta_A, \theta_B)$ eq.\ \eqref{eq:entWitness} can detect entangled Werner states for various values of $\Phi<0$.  
\begin{figure}
    \includegraphics[width=0.95\linewidth]{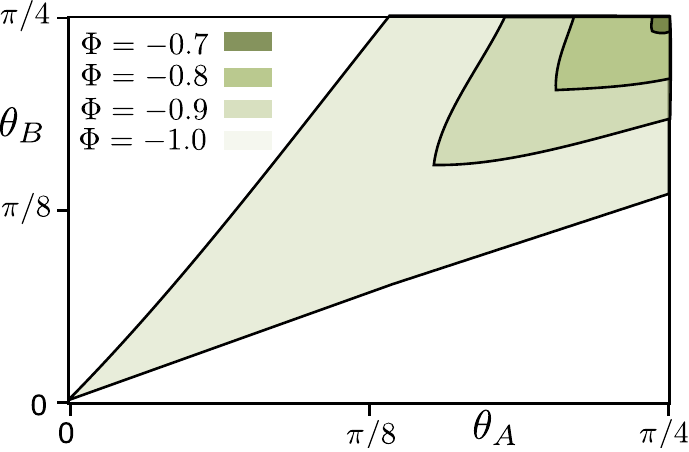}
    \caption{Values of $\theta_A$ and $\theta_B$ for which entangled Werner states are detected for various values of $\Phi$. 
    }
    \label{fig:werner}
\end{figure}
The set of measurements able to detect entangled Werner states shrinks as we increase $\Phi$. Measurements in MUB (top-right corner) are the most powerful for detecting entanglement. 

\emph{Bound on entropy.}
Consider the same setup as in the entanglement detection example; A and B are now interested in quantifying how much entanglement they might share with an eavesdropper, E. Let $\rho_{ABE}$ be the joint state of A, B, and E; in the worst case, this state is pure, and $S(\rho_E)=S(\rho_{AB})$. A direct application of eq.\ \eqref{eq:basicineq} gives the following bound:
\begin{equation}
    S(E)\le \min_{\mu, \lambda}\lambda H(X_{AB})+\mu H(Y_{AB}) + \log\Vert\c2_{AB}\Vert\,. 
\end{equation}

To obtain an analytic result, we can again use our conjecture, and proceed similarly to what we did above for entanglement detection. 
As long as $\sigma_2\le\gamma\le 1/\sigma_2$, we find that the optimal choices for $\mu$ and $\lambda$ are again given by \eqref{eq:optimalMuLambda}, where $\sigma_2$ is now the second largest singular value of $\c2_{AB}$. We arrive at the following improved bound for $S(E)$,
\begin{equation}\label{eq:entBound}
\begin{split}
    S(E)\le \frac{1}{1-\sigma_2^2}\bigl[&H(X)+H(Y)+2\sqrt{\Delta_X\Delta_Y}\sigma_2\\
    &-(1+\sigma_2^2)\log{d_Ad_B}\bigr]\,.
\end{split}
\end{equation}
Notice that, for MUB, we find again that the best we can do is to set $\mu=\lambda=1$.


\emph{Conclusions.}
In this article, we have introduced a new class of EUR, eq.\ \eqref{eq:basicineq}, which allows giving different weights to $H(X)$, $H(Y)$, and $S(\rho)$. We have shown that these EUR often provide better bounds than other EUR known in the literature. Moreover, we have shown in three examples that the freedom of giving different weights can be helpful in practical applications. Eq.\ \eqref{eq:basicineq} is expressed in terms of norms, which are difficult to estimate numerically. We have explored properties of these norms, and formulated a conjecture that, if correct, leads to an analytic result valid for most of the parameter space. In particular, we have obtained eq.\ \eqref{eq:quditEUR}, that provides a simple alternative to Maassen-Uffink, where $\log{c_{MU}}$ is replaced by $(1-\sigma_2)\log d$. 

There are several directions in which  this work could be extended. Clearly, it would be nice to prove conjecture \ref{conjecture}. Also,
using conjecture \ref{conjecture}, we can analytically study eq.\ \eqref{eq:basicineq} for values of $(\mu,\lambda)$ satisfying eq.\ \eqref{eq:conjecture}. Close to $\mu=\lambda=1$, the KMU result applies. It remains to explore eq.\ \eqref{eq:basicineq} for intermediate values of $(\mu, \lambda)$. 
Finally, for applications to QKD, it would be interesting to extend eq.\ \eqref{eq:basicineq} to Renyi and conditional entropies. 

\begin{acknowledgments}
We gratefully acknowledge discussions with Mario Berta, Giovanni Chesi, Thomas Cope, Andreea Lefterovici, Lorenzo Maccone, Tobias Osborne, Marco Tomamichel, and Henrik Wilming. AR thanks the Quantum Information Theory Group (QUit) of the University of Pavia for hospitality during the completion of this work. AR acknowledges financial support by the BMBF project QuBRA. RS acknowledges financial support by the Quantum Valley Lower Saxony and by the BMBF project ATIQ. 
\end{acknowledgments}

\bibliographystyle{unsrt}
\bibliography{entrolib}

\begin{thebibliography}{10}

\bibitem{mur}
R.~Schwonnek, D.~Reeb, and R.~F. Werner.
\newblock Measurement uncertainty for finite quantum observables.
\newblock {\em Mathematics}, 4(2):38, 2016.
\newblock {arXiv}:1604.00382.

\bibitem{entromur}
A.~Barchielli, M.~Gregoratti, and A.~Toigo.
\newblock Measurement uncertainty relations for discrete observables: Relative
  entropy formulation.
\newblock {\em Communications in Mathematical Physics}, 2018.

\bibitem{busch}
P.~Busch, P.~Lahti, and R.~F. Werner.
\newblock Measurement uncertainty relations.
\newblock {\em J. Math. Phys.}, 55:042111, 2014.
\newblock {arXiv}:1312.4392.

\bibitem{blw}
Paul Busch, Pekka Lahti, and Reinhard~F Werner.
\newblock Proof of heisenberg’s error-disturbance relation.
\newblock {\em Physical review letters}, 111(16):160405, 2013.

\bibitem{heisenberg}
W.~Heisenberg.
\newblock {\"U}ber den anschaulichen {I}nhalt der quantentheoretischen
  {K}inematik und {M}echanik.
\newblock {\em Z. Phys.}, 43:172--198, 1927.

\bibitem{schrodinger}
E.~Schr{\"o}dinger.
\newblock {D}ie gegenw{\"a}rtige {S}ituation in der {Q}uantenmechanik.
\newblock {\em Naturwiss.}, 23(48):807--812, 1935.

\bibitem{weyl1928}
H.~Weyl.
\newblock {\em Gruppentheorie und {Q}uantenmechanik}.
\newblock Hirzel, Leipzig, 1928.

\bibitem{robertson}
H.~P. Robertson.
\newblock The uncertainty principle.
\newblock {\em Phys. Rev.}, 34:163--164, 1929.

\bibitem{kennard}
E.~H. Kennard.
\newblock Zur {Q}uantenmechanik einfacher {B}ewegungstypen.
\newblock {\em Z. Phys.}, 44:326--352, 1927.

\bibitem{hirschmann}
I.~I. Hirschman.
\newblock A note on entropy.
\newblock {\em American Journal of Mathematics}, 79(1):152--156, 1957.

\bibitem{alberto}
A.~Riccardi, C.~Macchiavello, and L.~Maccone.
\newblock Tight entropic uncertainty relations for systems with dimension three
  to five.
\newblock {\em Phys. Rev. A}, 95:032109, 2017.
\newblock {arXiv:}1701.04304.

\bibitem{alberto2}
A.~Riccardi, C.~Macchiavello, and L.~Maccone.
\newblock Multipartite steering inequalities based on entropic uncertainty
  relationss.
\newblock 2017.
\newblock {arXiv:}1711.09707.

\bibitem{colesfurrer}
P.~J. Coles and F.~Furrer.
\newblock State-dependent approach to entropic
  measurement{\textendash}disturbance relations.
\newblock {\em Phys.\ Lett.\ A}, 379:105--112, 2015.
\newblock {arXiv}:1311.7637.

\bibitem{meandist}
G.~Sharma, C.~Mukhopadhyay, S.~Sazim, and A.K. Pati.
\newblock Quantum uncertainty relation based on the mean deviation.
\newblock and {arXiv}:quant-ph/0212090.

\bibitem{review1}
S.~Wehner and A.~Winter.
\newblock Entropic uncertainty relations {--} a survey.
\newblock {\em New J. Phys.}, 12:025009, 2010.
\newblock {arXiv}:0907.3704.

\bibitem{entro9}
Patrick Coles, Mario Berta, Marco Tomamichel, and Stephanie Wehner.
\newblock Entropic uncertainty relations and their applications.
\newblock {\em Rev.\ Mod.\ Phys.}, 89, 2017.
\newblock and \href{https://arxiv.org/abs/1511.04857}{arXiv:1511.04857}.

\bibitem{gao}
Li~Gao, Marius Junge, and Nicholas LaRacuente.
\newblock Uncertainty principle for quantum channels.
\newblock In {\em 2018 IEEE International Symposium on Information Theory
  (ISIT)}, pages 996--1000. IEEE, 2018.

\bibitem{rudnicki2014strong}
{\L}ukasz Rudnicki, Zbigniew Pucha{\l}a, and Karol {\.Z}yczkowski.
\newblock Strong majorization entropic uncertainty relations.
\newblock {\em Physical Review A}, 89(5):052115, 2014.

\bibitem{abbot}
A.~A. Abbott and C.~Branciard.
\newblock Noise and disturbance of qubit measurements: An information-theoretic
  characterization.
\newblock {\em Phys. Rev. A}, 94:062110, 2016.
\newblock {arXiv}:1607.00261.

\bibitem{carlos}
Carlos de~Gois, Kiara Hansenne, and Otfried G{\"u}hne.
\newblock Uncertainty relations from graph theory.
\newblock {\em arXiv preprint arXiv:2207.02197}, 2022.

\bibitem{entroaddi}
R.~Schwonnek.
\newblock Additivity of entropic uncertainty relations.
\newblock {\em {Quantum}}, 2:59, 2018.

\bibitem{konrad}
Konrad Szyma{\'n}ski and Karol {\.Z}yczkowski.
\newblock Geometric and algebraic origins of additive uncertainty relations.
\newblock {\em Journal of Physics A: Mathematical and Theoretical},
  53(1):015302, 2019.

\bibitem{ourentro}
K.~Abdelkhalek, R.~Schwonnek, H.~Maassen, F.~Furrer, J.~Duhme, P.~Raynal, B.G.
  Englert, and R.F. Werner.
\newblock Optimality of entropic uncertainty relations.
\newblock {\em Int.\ J.\ Quant.\ Inf.}, 13(06):1550045, 2015.
\newblock {arXiv}:1509.00398.

\bibitem{rastegin}
A.~E. Rastegin.
\newblock R{\'e}nyi formulation of the entropic uncertainty principle for
  {POVMs}.
\newblock {\em J.\ Phys.\ A}, 43:155302, 2010.

\bibitem{zozor}
S.~Zozor, G.~M. Bosyk, and M.~Portesi.
\newblock General entropy-like uncertainty relations in finite dimensions.
\newblock {\em J.\ Phys.\ A}, 47:495302, 2014.
\newblock {arXiv}:1311.5602.

\bibitem{xie}
Bo-Fu Xie, Fei Ming, Dong Wang, Liu Ye, and Jing-Ling Chen.
\newblock Optimized entropic uncertainty relations for multiple measurements.
\newblock {\em Physical Review A}, 104(6):062204, 2021.

\bibitem{wang}
Dong Wang, Fei Ming, Ming-Liang Hu, and Liu Ye.
\newblock Quantum-memory-assisted entropic uncertainty relations.
\newblock {\em Annalen der Physik}, 531(10):1900124, 2019.

\bibitem{berta}
M.~Berta, M.~Christandl, R.~Colbeck, J.~M. Renes, and R.~Renner.
\newblock The uncertainty principle in the presence of quantum memory.
\newblock {\em Nature Phys.}, 2010.
\newblock {arXiv}:0909.0950.

\bibitem{schneeloch}
J.~Schneeloch, C.~J. Broadbent, S.~P. Walborn, E.~G. Cavalcanti, and J.~C.
  Howell.
\newblock {Einstein-Podolsky-Rosen} steering inequalities from entropic
  uncertainty relations.
\newblock {\em Physical Review A}, 87:062103, 2013.
\newblock {arXiv}: 1303.7432.

\bibitem{ballesterwehner}
M.~A. Ballester and S.~Wehner.
\newblock Entropic uncertainty relations and locking: tight bounds for mutually
  unbiased bases.
\newblock {\em Phys.\ Rev.\ A}, 75, 2007.
\newblock {arXiv}:quant-ph/0606244.

\bibitem{spinsqueezing}
B.~L\"ucke, J.~Peise, G.~Vitagliano, J.~Arlt, L.~Santos, G.~T\'oth, and
  C.~Klempt.
\newblock Detecting multiparticle entanglement of {D}icke states.
\newblock {\em Phys. Rev. Lett.}, 112:155304, 2014.
\newblock {arXiv}:1403.4542.

\bibitem{riccardi2022entanglement}
Alberto Riccardi, Giovanni Chesi, Chiara Macchiavello, and Lorenzo Maccone.
\newblock Entanglement criteria from multiple observables entropic uncertainty
  relations.
\newblock {\em arXiv preprint arXiv:2207.13469}, 2022.

\bibitem{ana}
A.~C. Costa~Sprotte, R.~Uola, and O.~Gühne.
\newblock Steering criteria from general entropic uncertainty relations.
\newblock 2017.
\newblock {arXiv}:1710.04541.

\bibitem{guhne}
O.~G{\"u}hne and G.~T{\'o}th.
\newblock Entanglement detection.
\newblock {\em Phys.\ Rep.}, 474(1–6):1 -- 75, 2009.

\bibitem{guhne2}
O.~G{\"u}hne.
\newblock {\em Detecting quantum entanglement: entanglement witnesses and
  uncertainty relations}.
\newblock PhD thesis, Universit{\"a}t Hannover, 2004.

\bibitem{hofmann}
H.~F. Hofmann and S.~Takeuchi.
\newblock Violation of local uncertainty relations as a signature of
  entanglement.
\newblock {\em Phys.Rev. A.}, 68:032103, 2003.
\newblock and {arXiv}:quant-ph/0212090.

\bibitem{variances}
R.~Schwonnek, L.~Dammeier, and R.F. Werner.
\newblock State-independent uncertainty relations and entanglement detection in
  noisy systems.
\newblock {\em Phys. Rev. Lett.}, 119:170404, 2017.
\newblock {arXiv}:1705.10679.

\bibitem{bergh}
Bjarne Bergh and Martin G{\"a}rttner.
\newblock Experibmentally accessible bounds on distillable entanglement from
  entropic uncertainty relations.
\newblock {\em Physical Review Letters}, 126(19):190503, 2021.

\bibitem{brunner}
Tam{\'a}s Kriv{\'a}chy, Florian Fr{\"o}wis, and Nicolas Brunner.
\newblock Tight steering inequalities from generalized entropic uncertainty
  relations.
\newblock {\em Physical Review A}, 98(6):062111, 2018.

\bibitem{tomamichel}
M.~Tomamichel, C.~C.~W. Lim, N.~Gisin, and R.~Renner.
\newblock Tight finite-key analysis for quantum cryptography.
\newblock {\em Nature Commun.}, 3:634, 2012.
\newblock {arXiv}:1103.4130.

\bibitem{masini}
Michele Masini, Stefano Pironio, and Erik Woodhead.
\newblock Simple and practical diqkd security analysis via bb84-type
  uncertainty relations and pauli correlation constraints.
\newblock {\em Quantum}, 6:843, 2022.

\bibitem{ddorf}
Federico Grasselli, Gl{\'a}ucia Murta, Hermann Kampermann, and Dagmar Bru{\ss}.
\newblock Entropy bounds for multiparty device-independent cryptography.
\newblock {\em PRX Quantum}, 2(1):010308, 2021.

\bibitem{di1}
Ren{\'e} Schwonnek, Koon~Tong Goh, Ignatius~W Primaatmaja, Ernest Y-Z Tan,
  Ramona Wolf, Valerio Scarani, and Charles C-W Lim.
\newblock Device-independent quantum key distribution with random key basis.
\newblock {\em Nature communications}, 12(1):2880, 2021.

\bibitem{di2}
Wei Zhang, Tim van Leent, Kai Redeker, Robert Garthoff, Ren{\'e} Schwonnek,
  Florian Fertig, Sebastian Eppelt, Wenjamin Rosenfeld, Valerio Scarani,
  Charles C-W Lim, et~al.
\newblock A device-independent quantum key distribution system for distant
  users.
\newblock {\em Nature}, 607(7920):687--691, 2022.

\bibitem{deutsch}
D.~Deutsch.
\newblock Uncertainty in quantum measurements.
\newblock {\em Phys.~Rev.~Lett.}, 50:631--633, 1983.

\bibitem{kraus1987complementary}
Karl Kraus.
\newblock Complementary observables and uncertainty relations.
\newblock {\em Physical Review D}, 35(10):3070, 1987.

\bibitem{muf}
H.~Maassen and J.~B.~M. Uffink.
\newblock Generalized entropic uncertainty relations.
\newblock {\em Phys.~Rev.~Lett.}, 60:1103--1106, 1988.

\bibitem{hanstalk}
H.~Maassen.
\newblock The discrete entropic uncertainty relation.
\newblock Talk given in Leyden University. Slides of a later version available
  from the author's website, 2007.

\bibitem{frank}
Rupert~L Frank and Elliott~H Lieb.
\newblock Entropy and the uncertainty principle.
\newblock In {\em Annales Henri Poincar{\'e}}, volume~13, pages 1711--1717,
  2012.

\bibitem{amu}
L.~Dammeier, R.~Schwonnek, and R.F. Werner.
\newblock Uncertainty relations for angular momentum.
\newblock {\em New J. Phys.}, 9(17):093946, 2015.
\newblock {arXiv}:1505.00049.

\bibitem{additional}
A.~F. Rotundo and R.~Schwonnek.
\newblock Additional material.

\bibitem{cao}
Zhu Cao, Hongyi Zhou, Xiao Yuan, and Xiongfeng Ma.
\newblock Source-independent quantum random number generation.
\newblock {\em Physical Review X}, 6(1):011020, 2016.

\bibitem{rotem}
Rotem Arnon-Friedman, Fr{\'e}d{\'e}ric Dupuis, Omar Fawzi, Renato Renner, and
  Thomas Vidick.
\newblock Practical device-independent quantum cryptography via entropy
  accumulation.
\newblock {\em Nature communications}, 9(1):459, 2018.

\bibitem{rennerphd}
Renato Renner.
\newblock Security of quantum key distribution.
\newblock {\em International Journal of Quantum Information}, 6(01):1--127,
  2008.

\bibitem{tan}
Ernest Y-Z Tan, Ren{\'e} Schwonnek, Koon~Tong Goh, Ignatius~William
  Primaatmaja, and Charles C-W Lim.
\newblock Computing secure key rates for quantum cryptography with untrusted
  devices.
\newblock {\em npj Quantum Information}, 7(1):158, 2021.

\bibitem{werner1989quantum}
Reinhard~F Werner.
\newblock Quantum states with einstein-podolsky-rosen correlations admitting a
  hidden-variable model.
\newblock {\em Physical Review A}, 40(8):4277, 1989.

\bibitem{golden}
S.~Golden.
\newblock Lower bounds for the {H}elmholz function.
\newblock {\em Phys. Rev}, 137:B1127--B1128, 1965.

\bibitem{thompson}
C.~J. Thompson.
\newblock inequality with applications in statistical mechanics.
\newblock {\em J. Math. Phys.}, 6(11):1812--1813, 1965.

\bibitem{bhaskara2011approximating}
Aditya Bhaskara and Aravindan Vijayaraghavan.
\newblock Approximating matrix p-norms.
\newblock In {\em Proceedings of the twenty-second annual ACM-SIAM symposium on
  Discrete Algorithms}, pages 497--511. SIAM, 2011.

\end{thebibliography}

\newpage
\appendix
\onecolumngrid\section{Appendix}

\subsection{Proof of the main theorem}
\begin{proof}
\begin{align}
c(\alpha,\lambda,\mu)=\inf_{\rho} \quad \lambda H(X)+\mu H(Y) - \alpha H(\rho) \label{whatwewant}
\end{align}
Before we start the proof, we have to introduce some notation: essential steps of this proof rely on maximizing functionals on unit balls according to vector p-norms. We will use the notation
\begin{align}
B^p(\Omega)=\{\mathbf v \in \Omega| \Vert \mathbf v \Vert_p\leq 1 \}
\end{align}
to denote the unit norm ball restricted to a set $\Omega$. Within this notation the set of all probability distributions on a set with $d$ elements is denoted by $B^1(\mathbb{R}_+^d)$, which we will also abbreviate by $P_d$ for convenience. In the same manner we will abbreviate the set of all quantum states on a $n$-dimensional Hilbertspace by $S_n$   

(i) \textit{Linearization of entropies -} We start the proof by rewriting all entropies in \eqref{whatwewant} as variation of linear functionals: 
The relative entropy $D(\p|\q)=\sum p_i ( \log(p_i)-\log(q_i))$ between any pair of probability distributions $\p,\q\in P_d$ is positive and zero if and only if $\p=\q$. Hence, we can conclude, 
\begin{align}
0=\inf_{\q\in P_d} D(\p|\q)=\inf_{\q\in P_d} \sum p_i \left(\log(p_i)-\log(q_i)\right)=\inf_{\q\in P_d}  -\sum p_i \log(q_i)-H(\p),
\end{align}
for any fixed $\p$. This directly allow us to characterize the Shannon entropy by the variation 
\begin{align}
\H \p = \inf_{\q\in P_d}  -\sum p_i \log(q_i).\label{shannonvar}
\end{align}
In the same manner we can use the positivity of the quantum relative entropy $D(\rho|\sigma)$, between states $\rho,\sigma\in S_n$, to conclude an analogous characterization of the von Neumann entropy
\begin{align}
\H \rho= \inf_{\sigma\in S_n} -\tr\left(\rho \log \sigma \right).\label{neumannvar}
\end{align}
The entropy $\H X$ of a measurement $X$ is computed as the Shannon entropy of its output distribution, i.e. as entropy of the distribution $\p^\rho$ with entries $p^\rho_i=\tr \left(\rho X_i\right)$. Using the variational characterization \eqref{shannonvar} we can rewrite this as 
\begin{align}
\H X =- \inf_{\p\in P_d} \sum \tr(\rho X_i) \log(p_i)=-\inf_{\p\in P_d} \tr\left( \rho \sum_{i=1}^{n_X} X_i \log(p_i)\right).
\end{align}
Introducing the operators 
\begin{align}
A_X(\p):=-\sum_{i=1}^{n_X} X_i \log(p_i) \text{ \quad and \quad } A_Y(\q):=-\sum_{j=1}^{n_Y} Y_j \log(q_j)
\end{align}
for the measurements $X$ and $Y$ respectively, then enables us to rewrite the Shannon entropy of our measurement outcomes as
\begin{align}
\H{X} = \inf_{\p\in B_1^+} \tr\left( \rho A_X(\p)\right) \text{ and } \H{Y} = \inf_{\q\in B_1^+} \tr\left( \rho A_Y(\q)\right).  \label{shannonvarXY}
\end{align}
Together with \eqref{neumannvar} we can rewrite the l.h.s. of \eqref{whatwewant}, our desired quantity,  as 
\begin{align}
c(\alpha,\lambda,\mu)=\inf_{\rho\in S_n} \sup_{\sigma\in S_n} \inf_{\p,\q\in P_d}
\tr\left(\rho
 \left(
 	\lambda A_X(\p)+ \mu A_Y(\q) +\alpha \log(\sigma) 
 \right)
\right).
\end{align}
Exchanging an $\inf \sup$ with a $\sup \inf$ will then give us a lower bound
\begin{align}
c(\alpha,\lambda,\mu)\geq\sup_{\sigma\in S_n} \inf_{\rho\in S_n} \inf_{\p,\q\in P_d}
\tr\left(\rho
\left(
\lambda A_X(\p)+ \mu A_Y(\q) +\alpha \log(\sigma) 
\right)
\right).\label{whatwewanttobound}
\end{align}

(ii)  \textit{Representation as operator norm:}
For a self adjoint operator, say $K$, the minimization of a functional $\tr(\rho K)$  over all $\rho$ will be attained on an eigenstate corresponding to an extremal (the smallest) eigenvalue. This extremal eigenstate will still be an optimizer if we apply a monotone decreasing function, say $f$, to $K$ and consider a maximization of $\tr(\rho f(K))$ instead. Following this idea we can conclude the identity 
\begin{align}
\inf_{\rho\in S_n} \tr\left( \rho K \right)=-\log\left(\sup_{\rho \in S_n}\tr(\rho e^{-K})  \right),\label{normexp}
\end{align}
where we applied the mon. decreasing function $f=\exp(-x)$ first and its inverse $f^{-1}=-\log(x)$ afterwards.
Using \eqref{normexp} on \eqref{whatwewanttobound} then gives
\begin{align}
c(\alpha,\lambda,\mu)&\geq \sup_{\sigma\in S_n} \inf_{p,q\in P_d} -\log\left(\sup_{\rho\in S_n}
\tr\left(\rho
e^{-\lambda A_X(\p)-\mu A_Y(\q) - \alpha \log(\sigma) }
\right)
\right)\nonumber\\
&= \sup_{\sigma\in S_n} \inf_{\p,\q\in P_d} -\log
\left\Vert
e^{-\lambda A_X(\p)-\mu A_Y(\q) - \alpha \log(\sigma) }
\right\Vert_\infty.
\end{align}
where we parsed the maximization over $\rho$ as operator norm.

(iii) \textit{Fixing $\sigma$:}
We can bound the above from below by considering a fixed $\sigma$. For $\alpha\neq 0$, choosing 
\begin{align}
\sigma=\frac{e^{-1/\alpha(\lambda A_X(\p)+\mu A_Y(\q))}}
{N}
\end{align}
with a normalization constant 
\begin{align}
N=\tr\left( e^{-1/\alpha(\lambda A_X(\p)+\mu A_Y(\q)) }\right)
\end{align}
gives 
\begin{align}
c(\alpha,\lambda,\mu)&\geq \inf_{\p,\q\in P_d} -\log
\left\Vert
e^{0 -\alpha \log(1/N) \mathbb{I}}
\right\Vert_\infty
=\inf_{\p,\q\in P_d} -\alpha\log\left(
N\right)\nonumber\\
&= \inf_{\p,\q\in P_d} -\alpha\log\left(
\tr\left(
e^{- \lambda/\alpha A_X(\p)-\mu/\alpha  A_Y(\q) }\right)\right)\nonumber\\
&= -\alpha\log\left(\sup_{\p,\q\in P_d}
\tr\left(
e^{- \lambda/\alpha A_X(\p)-\mu/\alpha  A_Y(\q) }\right)\right).\label{gleichkommtgt}
\end{align}
(iv)\textit{ Golden Thompson:} Since the function $-\log$ is mon. decreasing, we can get lower bounds on $c(\lambda,\mu,\alpha)$ by upper bounding the argument of the logarithm in the above. The structure of \eqref{gleichkommtgt} suggests to use the Golden Thompson inequality \cite{golden} \cite{thompson} 
\begin{align}
\tr\left(e^{A+B}\right) \leq \tr\left( e^{A} e^B\right)
\end{align}
in order to get the bound 
\begin{align}
c(\lambda,\mu,\alpha)&\geq
-\alpha\log\left(\sup_{\p,\q\in P_d}
\tr \left(
e^{- \lambda/\alpha A_X(\p)} e^{-\mu/\alpha  A_Y(\q) }\right)\right)\nonumber\\
&= 
-\alpha\log\left(\sup_{\p,\q\in P_d}
\tr \left(
\Bigl(\sum_{i=1}^{n_X}e^{\lambda/\alpha \log(p_i)}X_i\Bigr)\Bigl(
\sum_{j=1}^{n_Y}e^{\lambda/\alpha \log(q_j)}Y_j\Bigr) \right)\right)\nonumber\\
&= 
-\alpha\log\left(\sup_{\p,\q\in P_d}
\sum_{i=1,j=1}^{n_X,n_Y} p_i^{\lambda/\alpha}  q_j^{\mu/\alpha}
\tr\left(X_i Y_j\right) \right)
\end{align}

In the above we can substitute the matrix $C^{(2)}$ with entries 
$C^{(2)}_{ij}=\tr\left(X_i Y_j\right) $ and observe that, for all $p,q\in\mathcal{P}_d$, 
\begin{align}
   \left( p_1^{\lambda/\alpha} ,\dots  ,p_{n_X}^{\lambda/\alpha} \right) \in \mathcal{B}^{\alpha/\lambda} (\mathbb R_+) \text{ and } \left( q_1^{\lambda/\alpha} ,\dots  ,q_{n_Y}^{\lambda/\alpha} \right) \in \mathcal{B}^{\alpha/\mu} (\mathbb R_+). 
\end{align}
We arrive at 
\begin{align}
    c(\lambda,\mu,\alpha)&\geq -\alpha\log\left(
\sup_{\psi\in \mathcal{B}^{\alpha/\lambda}, \phi  \in \mathcal{B}^{\alpha/\mu} }
 \langle \psi |C^{(2)}|\phi \rangle\right)
\end{align}

By applying Hölder's inequality we get 
\begin{align}
\sup_{\psi\in \mathcal{B}^{\alpha/\lambda}, \phi  \in \mathcal{B}^{\alpha/\mu} }
 \langle \psi |C^{(2)}|\phi \rangle
 = \sup_{\phi\in\mathbb C^d} 
 \frac{ \Vert C^{(2)} \phi \Vert_{\frac{\alpha}{\alpha-\lambda}} }
 {\Vert  \phi \Vert_{\alpha/\mu} },
\end{align}
which gives the desired bound in Thm.\ref{thm:bound}
\end{proof}

\begin{corr}(Gibbs variational principle)\label{corrlieb}
Let $L$ be a self adjoined operator. For all quantum states $\rho$ the estimate 
\begin{align}
\tr( \rho L) \geq H(\rho) - \log\tr\left(e^{-L}\right)
\end{align}
holds.
\end{corr}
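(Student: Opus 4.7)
The plan is to derive this from the variational characterization of the von Neumann entropy already established in the proof of Theorem~\ref{thm:bound}, namely equation \eqref{neumannvar}:
\begin{align}
H(\rho) = \inf_{\sigma \in S_n} -\tr\left(\rho \log \sigma\right).
\end{align}
This characterization is itself just a rephrasing of the non-negativity of the quantum relative entropy (Klein's inequality), so no new ingredient is needed.

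First I would assume without loss of generality that the right-hand side is well-defined, i.e.\ that $L$ has no eigenvalues driving $\tr(e^{-L})$ to infinity; otherwise the bound is trivial. Then the key step is to pick a clever trial state in the infimum above. The natural choice is the Gibbs state associated with $L$,
\begin{align}
\sigma_L := \frac{e^{-L}}{Z}, \qquad Z := \tr\left(e^{-L}\right),
\end{align}
which is a valid density operator since $e^{-L}$ is positive and $Z$ is its normalization. A direct computation gives $\log \sigma_L = -L - (\log Z)\,\mathbb{I}$, so
\begin{align}
-\tr\left(\rho \log \sigma_L\right) = \tr(\rho L) + \log Z.
\end{align}

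Plugging this trial state into \eqref{neumannvar} yields $H(\rho) \leq \tr(\rho L) + \log \tr(e^{-L})$, and rearranging produces exactly the claim. The main obstacle is really just bookkeeping: one must be careful with the sign of $\log Z$ and note that $\sigma_L \in S_n$ so it is admissible in the infimum. Since the corollary is essentially a restatement of Klein's inequality after a change of variables, there is no deeper difficulty beyond identifying the correct trial state.
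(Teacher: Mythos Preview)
Your proof is correct and is essentially identical to the paper's: both pick the Gibbs state $\sigma_L = e^{-L}/\tr(e^{-L})$, compute $-\tr(\rho\log\sigma_L)=\tr(\rho L)+\log\tr(e^{-L})$, and invoke the non-negativity of the relative entropy (equivalently, the variational characterization \eqref{neumannvar}) to conclude.
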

\begin{proof}
Consider the thermal state $\sigma=e^A/\tr(e^A)$. We have 
\begin{align}
\tr(\rho L) 
&=
\tr\left(\rho \log e^{-L}\right)= \tr\left(\rho \log \frac{e^{-L}}{\tr(e^{-L})}\right)- \log\tr\left(e^{-L}\right)=-\tr\left(\rho \log \sigma \right)-\log\tr\left(e^{-L}\right)\nonumber\\
&\geq 
H(\rho)- \log\tr\left(e^{-L}\right),
\end{align}
which holds since $D(\rho||\sigma)=\tr( \rho \log \rho -\rho\log \sigma)\geq 0$
\end{proof}

\section{Properties of $\norm*{\c2}_{r\rightarrow s}$}\label{app:normProp}
In this appendix, we collect results on norms
\begin{equation}\label{eq:normDef}
    \lVert\c2\rVert_{r\rightarrow s}\coloneqq \sup_{\phi \in \mathbb{C}^d}\frac{\lVert\c2 \phi\rVert_s}{\lVert \phi\rVert_r}\,,
\end{equation}
where \c2 is a $d\times d$ doubly stochastic matrix, obtained from a unitary matrix $U$, according to $\c2_{ij}=\abs{U_{ij}}^2$. The norm $\norm{\cdot}_{r\rightarrow s}$ is the operator norm between $\mathbb{C}^{d}$ equipped with the $p$-norm, for $p=r$, and again $\mathbb{C}^{d}$ equipped with the $p$-norm, with $p=s$. The $p$-norm is a well defined norm only for $p\ge1$; therefore, we only consider $r,s\ge 1$.

\begin{lem}\label{lem:posVectors}(Optimize over positive vectors). 
In general, to calculate $r\rightarrow s$ norms we need to optimize over complex vectors $\phi\in \mathbb{C}^{d}$. To calculate the norm of \c2, with $\c2_{ij}=\abs{U_{ij}}^2$, it is enough to optimize over positive vectors: $x=(x_1, x_2, \dots, x_d)\in\mathbb{R}_+^d$, with $x_i\ge 0$,
\begin{equation}
    \lVert\c2\rVert_{r\rightarrow s}= \sup_{x \in \mathbb{R}_+^d}\frac{\lVert\c2 x\rVert_s}{\lVert x\rVert_r}\,.
\end{equation}
\end{lem}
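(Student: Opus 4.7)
The plan is to exploit the fact that the matrix $\c2$ has non-negative entries to replace a general complex optimizer by its entrywise absolute value without decreasing the ratio in \eqref{eq:normDef}.

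First, I would take an arbitrary $\phi\in\mathbb{C}^d$ and define $x\in\mathbb{R}_+^d$ by $x_i \coloneqq |\phi_i|$. The denominator is unchanged, since $\|x\|_r=\|\phi\|_r$ for every $r\ge 1$. For the numerator, I would use the triangle inequality componentwise: because $\c2_{ij}\ge 0$,
\begin{equation}
\bigl|(\c2\phi)_i\bigr| = \Bigl|\sum_{j} \c2_{ij}\,\phi_j\Bigr| \le \sum_{j} \c2_{ij}\,|\phi_j| = (\c2 x)_i.
\end{equation}
Since $p$-norms are monotone with respect to componentwise inequalities between non-negative vectors, this yields $\|\c2\phi\|_s \le \|\c2 x\|_s$.

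Dividing gives
\begin{equation}
\frac{\|\c2\phi\|_s}{\|\phi\|_r} \le \frac{\|\c2 x\|_s}{\|x\|_r},
\end{equation}
so the supremum over $\mathbb{C}^d$ is bounded above by the supremum over $\mathbb{R}_+^d$. The reverse inequality is immediate since $\mathbb{R}_+^d\subset\mathbb{C}^d$. Together, these establish the claim.

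There is no real obstacle here: the argument is entirely structural and relies only on the entrywise non-negativity of $\c2$ (which follows from $\c2_{ij}=|U_{ij}|^2$) and the monotonicity of the $s$-norm on $\mathbb{R}_+^d$. The only thing to be slightly careful about is to note that the hypothesis $\c2_{ij}\ge 0$ is used crucially when pulling the absolute value inside the sum; without it, one could only bound $|(\c2\phi)_i|$ by $\sum_j |\c2_{ij}||\phi_j|$, which would change the matrix acting on $x$.
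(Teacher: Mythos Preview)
Your argument is correct and, in fact, slightly cleaner than the paper's. The paper parametrizes $\phi_j = x_j e^{i\theta_j}$, expands $\bigl|(\c2\phi)_i\bigr|^2$ explicitly into diagonal terms and cross terms carrying $\cos(\theta_j-\theta_{j'})$, and then observes that the supremum over the phases is attained at $\theta_j\equiv\text{const}$ because all coefficients $\abs{U_{ij}}^2\abs{U_{ij'}}^2 x_j x_{j'}$ are non-negative. Your use of the triangle inequality $\bigl|\sum_j \c2_{ij}\phi_j\bigr|\le \sum_j \c2_{ij}\,|\phi_j|$ encodes exactly the same mechanism in one line, and the monotonicity of the $s$-norm on $\mathbb{R}_+^d$ replaces the paper's implicit step of taking the $s$-th power and summing. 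Both proofs rest on the same structural fact---entrywise non-negativity of $\c2$---and reach the same conclusion; yours simply avoids the explicit polar expansion.
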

\begin{proof}
This follows from the definition of the norm and the special form of \c2,
\begin{equation}
\begin{split}
    \lVert\c2\rVert_{r\rightarrow s}&= \sup_{\phi \in \mathbb{C}^d}\frac{\lVert\c2 \phi\rVert_s}{\lVert \phi\rVert_r}\\
    &=\sup_{x\in \mathbb{R}_+^d}\sup_{\theta \in [0, 2\pi]^d}\frac{\bigl[\sum_i\bigl(\sum_j \abs{U_{ij}}^4x_j^2+2\sum_{j<j'}\abs{U_{ij}}^2\abs{U_{ij'}}^2x_jx_{j'}\cos(\theta_j-\theta_{j'})\bigr)^{s/2}\bigr]^{1/s}}{(\sum_ix_i^r)^{1/r}}\\
    &=\sup_{x\in \mathbb{R}_+^d}\frac{\bigl[\sum_i\bigl(\sum_j \abs{U_{ij}}^2x_j\bigr)^s\bigr]^{1/s}}{(\sum_ix_i^r)^{1/r}}\\
    &=\sup_{x \in \mathbb{R}_+^d}\frac{\lVert\c2 x\rVert_s}{\lVert x\rVert_r}\,.
\end{split}  
\end{equation}
In going to the second line we have set $\phi_i = x_i e^{i\theta_i}$, and in going to the third line we have taken the supremum over the angles and rearranged the sums. Notice that in this step, it is important that all quantities appearing in the numerator are positive.
\end{proof}

\begin{lem}\label{lem:scaling}(Scaling symmetry).
The term in the r.h.s.\ of eq.\ \eqref{eq:normDef} is invariant under scaling $x\rightarrow k x$, for any $k>0$. Let $x_*$ be a value that maximizes the r.h.s., then so is $kx_*$. In particular, we can pick $k=\norm{x_*}_p^{-1}$. This means that in searching for the supremum, we can restrict to vectors with unit $p$-norm. This can often simplify calculations.
\end{lem}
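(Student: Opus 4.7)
The plan is to observe that both the numerator and denominator of the ratio $\lVert \c2 x\rVert_s / \lVert x\rVert_r$ are positively homogeneous of degree one in $x$, so the ratio is invariant under positive rescaling; normalization then follows immediately.

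First I would record the two elementary facts that drive the argument: linearity of $\c2$ (so that $\c2(kx) = k\,\c2 x$ for any scalar $k$) and absolute homogeneity of the $p$-norm (so that $\lVert k x\rVert_p = k\lVert x\rVert_p$ whenever $k>0$, valid for every $p \ge 1$). Applying these to numerator and denominator separately gives
\begin{equation}
\frac{\lVert \c2 (kx)\rVert_s}{\lVert kx\rVert_r}
= \frac{k\lVert \c2 x\rVert_s}{k\lVert x\rVert_r}
= \frac{\lVert \c2 x\rVert_s}{\lVert x\rVert_r},
\end{equation}
which is the claimed invariance. Note that if $x \in \mathbb{R}_+^d$ then $kx \in \mathbb{R}_+^d$ as well for $k>0$, so the rescaling stays inside the feasible set used in Lemma \ref{lem:posVectors}.

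From the invariance, the second part of the statement is immediate: if $x_*$ attains the supremum, then so does $k x_*$ for every $k>0$. Since any maximizer must satisfy $x_* \neq 0$ (otherwise the denominator vanishes), the choice $k = \lVert x_*\rVert_r^{-1}$ is well defined and produces a maximizer of unit $r$-norm. Hence the supremum in \eqref{eq:normDef} can equivalently be taken over the unit sphere $\{x : \lVert x\rVert_r = 1\}$, which is the content of the lemma.

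There is no real obstacle here — this is essentially a one-line consequence of homogeneity. The only subtlety worth flagging is that the rescaling argument requires $k>0$ (to preserve positivity of entries when combined with Lemma \ref{lem:posVectors}) and requires ruling out $x_* = 0$, both of which are trivial.
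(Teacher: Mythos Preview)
Your argument is correct and matches the paper's treatment: the paper states this lemma without proof, since the scaling invariance is an immediate consequence of the homogeneity of the $p$-norms, exactly as you wrote. The only cosmetic point is that the paper writes the normalization constant as $\lVert x_*\rVert_p^{-1}$ for a generic $p$, while you (sensibly) specialize to $p=r$; either reading is fine.
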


In some simple examples, it is possible to calculate the norm of \c2 analytically. Here, we consider the two extremal cases in which the bases corresponding to $X$ and $Y$ are the same, or are mutually unbiased.

\begin{lem}\label{lem:mubNorm}(Mutually unbiased bases).
For mutually unbiased bases, the norm can be calculated analytically,
\begin{equation}
    \log\bigl\Vert\c2_{\text{\tiny MUB}}\bigr\Vert_{r\rightarrow s} =
        \Bigl(\frac{1}{s}-\frac{1}{r}\Bigr)\log{d}\,. 
\end{equation}
\end{lem}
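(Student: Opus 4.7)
The plan is to exploit the rank-one structure of $\c2_{\text{\tiny MUB}}$, all of whose entries equal $1/d$. First I would apply Lemma \ref{lem:posVectors} to restrict the supremum to positive vectors $x\in\mathbb{R}_+^d$. For any such $x$, direct computation gives $(\c2_{\text{\tiny MUB}}\, x)_i = \|x\|_1/d$ for every index $i$, so the image is a constant vector of length $d$ with uniform entries $\|x\|_1/d$.

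Evaluating the $s$-norm of this constant vector then yields $\|\c2_{\text{\tiny MUB}}\, x\|_s = d^{1/s}\cdot \|x\|_1/d = d^{1/s-1}\,\|x\|_1$, and hence
\begin{equation}
    \|\c2_{\text{\tiny MUB}}\|_{r\rightarrow s} \;=\; d^{1/s-1}\,\sup_{x\in\mathbb{R}_+^d}\frac{\|x\|_1}{\|x\|_r}.
\end{equation}
The remaining ratio I would control by H\"older's inequality with conjugate exponents $r$ and $r/(r-1)$, applied to $\sum_i x_i = \sum_i x_i\cdot 1$, which gives $\|x\|_1\le d^{1-1/r}\|x\|_r$. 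Equality is attained at the uniform vector $x=(1,\dots,1)$ (or any positive multiple), which lies in $\mathbb{R}_+^d$, so the supremum is exactly $d^{1-1/r}$. Multiplying the two factors gives $\|\c2_{\text{\tiny MUB}}\|_{r\rightarrow s}=d^{1/s-1/r}$, and taking $\log$ delivers the claim.

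There is really no main obstacle: the entire argument reduces to the observation that $\c2_{\text{\tiny MUB}}$ is proportional to the all-ones matrix, which collapses the $r\rightarrow s$ norm into a product of two elementary $p$-norm embedding factors on $\mathbb{R}^d$. The only subtle points are verifying that the extremizer is legitimate (it is positive, hence permitted by Lemma \ref{lem:posVectors}, and it saturates H\"older) and that the formula is consistent for the boundary cases $r=1$ (where $d^{1-1/r}=1$) and $s=\infty$ (where $d^{1/s-1}=d^{-1}$); both are immediate.
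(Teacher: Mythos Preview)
Your proof is correct and follows essentially the same route as the paper: both exploit the rank-one structure of $\c2_{\text{\tiny MUB}}$ to reduce the norm to $d^{1/s-1}\sup_{x\in\mathbb{R}_+^d}\|x\|_1/\|x\|_r$. The only difference is in evaluating this last supremum---you apply H\"older's inequality directly to $\sum_i x_i\cdot 1$, whereas the paper normalizes to probability vectors and invokes the extremal property of R\'enyi entropies; your argument is the more elementary and direct of the two.
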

\begin{proof}
In the mutually unbiased case, the matrix \c2 has constant entries, $(C^{(2)}_{\text{\tiny MUB}})_{ij}=1/d, \forall i,j$. 
Substituting this in the definition of the norm, we obtain
\begin{equation}
\begin{split}
    \bigl\Vert\c2_{\text{\tiny MUB}}\bigr\Vert_{r\rightarrow s} = d^{\frac1s-1}\sup_{x\in\mathbb{R}_+^d}\frac{\sum_i x_i}{(\sum_i x_i^r)^{1/r}}\,.
\end{split}
\end{equation}
To find the argsup, it is convenient to consider vectors with unit 1-norm, $\sum_ix_i=1$. This corresponds to $x$ be a valid probability distribution over $d$ elements, so let's call it $p$. We have
\begin{equation}
\begin{split}
    \bigl\Vert\c2_{\text{\tiny MUB}}\bigr\Vert_{r\rightarrow s} &= d^{\frac1s-1}\bigl(\sup_{p\in P_d}\frac{1}{\sum_i p_i^r}\bigr)^{1/r} \\
    &= d^{\frac1s-1}\bigl(\inf_{p\in P_d}\sum_i p_i^r\bigr)^{-1/r}\,,
\end{split}
\end{equation}
where we have used the fact that the $r$-root is monotonic increasing for $r>0$ and non-negative arguments. We can rewrite the expression on the right in terms of Rényi entropies,
\begin{equation}
    \bigl\Vert\c2_{\text{\tiny MUB}}\bigr\Vert_{r\rightarrow s} = d^{\frac1s-1}\bigl(\inf_{p\in P_d} \exp[(1-r)S_r(p)] \bigr)^{-1/r}\,,
\end{equation}
where 
\begin{equation}
    S_\alpha(p)\equiv\frac{1}{1-\alpha}\log\sum_ip_i^\alpha\,.
\end{equation}
It is known that Rényi entropies are maximized when the entries of $p$ are all equal to each other, and minimized when one of the entries is equal to 1. From this follows that 
\begin{equation}
    \log\bigl\Vert\c2_{\text{\tiny MUB}}\bigr\Vert_{r\rightarrow s} =\begin{dcases}
        \Bigl(\frac{1}{s}-\frac{1}{r}\Bigr)\log{d} & r\ge 1\,,\\
        \Bigl(\frac{1}{s}-1\Bigr)\log{d} & r<1\,.
    \end{dcases}
\end{equation}
\end{proof}
Notice that in our case $r=\lambda^{-1}$, which is always greater than 1, as $\lambda\in[0,1]$, so we are only interested in the first case above. Moreover, we have the following simple corollary. 
\begin{corr}
Consider a bipartite Hilbert space, $\mathcal{H}=\mathcal{H}_A\otimes \mathcal{H}_B$, then the norm for MUB is additive, 
\begin{equation}
    \log\bigl\Vert\c2_{\text{\tiny MUB, AB}}\bigr\Vert_{r\rightarrow s}=  \log\bigl\Vert\c2_{\text{\tiny MUB, A}}\bigr\Vert_{r\rightarrow s}+ \log\bigl\Vert\c2_{\text{\tiny MUB, B}}\bigr\Vert_{r\rightarrow s}\,.
\end{equation}
\end{corr}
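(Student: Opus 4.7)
The plan is to reduce everything to Lemma \ref{lem:mubNorm}, since that lemma already computes the MUB norm in closed form as $(1/s - 1/r)\log d$. The only thing to verify is that MUB on each factor lifts to MUB on the tensor product, after which the additivity statement becomes the trivial identity $\log(d_A d_B) = \log d_A + \log d_B$.

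First I would argue that if $\{\ket{i}_A\}, \{\ket{i'}_A\}$ are mutually unbiased on $\mathcal{H}_A$ and $\{\ket{j}_B\}, \{\ket{j'}_B\}$ are mutually unbiased on $\mathcal{H}_B$, then the product bases $\{\ket{i}_A\otimes\ket{j}_B\}$ and $\{\ket{i'}_A\otimes\ket{j'}_B\}$ are mutually unbiased on $\mathcal{H}_A\otimes\mathcal{H}_B$. Indeed,
\begin{equation}
|\langle ij|i'j'\rangle|^2 = |\langle i|i'\rangle_A|^2\,|\langle j|j'\rangle_B|^2 = \frac{1}{d_A}\cdot\frac{1}{d_B} = \frac{1}{d_A d_B}\,,
\end{equation}
so every entry of $\c2_{\text{\tiny MUB, AB}}$ equals $1/(d_A d_B)$, which is precisely the MUB form of Lemma \ref{lem:mubNorm} with total dimension $d = d_A d_B$.

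Next I would simply apply Lemma \ref{lem:mubNorm} three times. Writing $\kappa \coloneqq 1/s - 1/r$, we get
\begin{equation}
\log\bigl\Vert\c2_{\text{\tiny MUB, AB}}\bigr\Vert_{r\to s} = \kappa\log(d_A d_B) = \kappa\log d_A + \kappa\log d_B = \log\bigl\Vert\c2_{\text{\tiny MUB, A}}\bigr\Vert_{r\to s} + \log\bigl\Vert\c2_{\text{\tiny MUB, B}}\bigr\Vert_{r\to s}\,,
\end{equation}
which is the claim.

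There is no real obstacle here: the whole corollary is just the additivity of the logarithm combined with the product structure of the MUB condition. The only thing worth being careful about is to record that $r \ge 1$, so that one stays in the regime where the formula of Lemma \ref{lem:mubNorm} takes the simple form $(1/s - 1/r)\log d$ rather than the alternative branch.
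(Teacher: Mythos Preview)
Your proof is correct and matches the paper's approach: the paper gives no explicit proof at all, simply stating the result as a ``simple corollary'' of Lemma~\ref{lem:mubNorm}, and your argument is precisely the one-line unpacking of that remark---apply the formula $(1/s-1/r)\log d$ with $d=d_Ad_B$ and use $\log(d_Ad_B)=\log d_A+\log d_B$. The paragraph verifying that product MUBs are MUB on the tensor product is not strictly needed (one may take $\c2_{\text{\tiny MUB, AB}}$ to be, by definition, the constant matrix $1/(d_Ad_B)$ in dimension $d_Ad_B$), but it does no harm.
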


\begin{lem}\label{lem:idNorm}(Norm for identity).
Next let's consider $\c2=\mathds{1}_d$, the identity matrix. This corresponds to $X=Y$. 
\begin{equation}\label{eq:normId}
    \log\norm{\mathds{1}_d}_{r\rightarrow s} =\begin{dcases}
        \Bigl(\frac{1}{s}-\frac{1}{r}\Bigr)\log{d} & r\ge s\,,\\
        0 & r<s\,.
    \end{dcases}
\end{equation}
\end{lem}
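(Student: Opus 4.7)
My plan is to compute the two cases directly from the definition of the $r \to s$ norm, specialized to the identity matrix so that $\mathds{1}_d x = x$. The problem therefore reduces to determining
\begin{equation}
    \norm{\mathds{1}_d}_{r\to s} = \sup_{x \in \mathbb{R}_+^d} \frac{\norm{x}_s}{\norm{x}_r},
\end{equation}
where restriction to positive vectors is justified by Lemma \ref{lem:posVectors} (or more directly because $|\,\cdot\,|$ preserves $p$-norms). By the scaling symmetry of Lemma \ref{lem:scaling}, I can also restrict to unit vectors in whichever norm is convenient.

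For the case $r \ge s$, the natural tool is H\"older's inequality applied to the trivial decomposition $|x_i|^s = |x_i|^s \cdot 1$ with conjugate exponents $r/s$ and $r/(r-s)$:
\begin{equation}
    \sum_{i=1}^d |x_i|^s \;\le\; \Bigl(\sum_i |x_i|^r\Bigr)^{s/r}\Bigl(\sum_i 1\Bigr)^{(r-s)/r} = \norm{x}_r^{\,s}\, d^{(r-s)/r}.
\end{equation}
Taking $s$-th roots gives $\norm{x}_s \le d^{1/s-1/r}\norm{x}_r$, so $\norm{\mathds{1}_d}_{r\to s}\le d^{1/s-1/r}$. Saturation on the uniform vector $x=(1,\dots,1)$ (where $\norm{x}_r = d^{1/r}$ and $\norm{x}_s = d^{1/s}$) shows the bound is tight.

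For the case $r<s$ the monotonicity of $\ell^p$-norms on a finite set gives $\norm{x}_s \le \norm{x}_r$; one clean way to see this is to normalize $\norm{x}_r=1$ so that $|x_i| \le 1$ for every $i$, and then $|x_i|^s \le |x_i|^r$ entry-wise since $s>r$, hence $\norm{x}_s^{\,s} \le 1$. Thus $\norm{\mathds{1}_d}_{r\to s}\le 1$. Equality is attained on any standard basis vector $e_i$, giving $\norm{\mathds{1}_d}_{r\to s}=1$ and $\log\norm{\mathds{1}_d}_{r\to s}=0$.

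I do not anticipate a substantial obstacle here; the proof is essentially two applications of standard $\ell^p$ inequalities. The only subtle point is to keep track of when the extremizer is the uniform vector versus a vertex of the simplex, which matches the physically intuitive picture that the uniform distribution is extremal for large $r$ and sharply concentrated distributions are extremal for small $r$.
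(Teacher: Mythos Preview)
Your proof is correct, but it follows a different route from the paper's. The paper differentiates the ratio $f(x)=\norm{x}_s/\norm{x}_r$ directly, finds that the only interior critical point (modulo scaling) is $\mathbf{1}=(1,\dots,1)$, and then inspects the Hessian there: its nontrivial eigenvalues have the sign of $s-r$, so $\mathbf{1}$ is a maximum exactly when $r\ge s$, while for $r<s$ the optimum sits on the boundary at a coordinate vector. Your argument bypasses calculus entirely by invoking H\"older's inequality (with the power-mean saturation at $\mathbf{1}$) for $r\ge s$ and the elementary $\ell^p$ monotonicity for $r<s$. Your approach is shorter and more self-contained; the paper's derivative computation, on the other hand, is the same machinery reused later (Lemma~\ref{lem:localMax}) to analyze general doubly stochastic $C^{(2)}$, so there it serves as a warm-up for the technique.
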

\begin{proof}
The norm is given by 
\begin{equation}
    \norm{\mathds{1}_d}_{r\rightarrow s}=\sup_{x\in \mathbb{R}_+^d}\frac{(\sum_i x_i^s)^{1/s}}{(\sum_j x_j^r)^{1/r}}\,.
\end{equation}
Let $f(x)$ be the quantity inside the sup, its derivative is given by
\begin{equation}
    \partial_if(x)=f(x)\Bigl(\frac{x_i^{s-1}}{\sum_kx_k^s}-\frac{x_i^{r-1}}{\sum_kx_k^r}\Bigr)\,.
\end{equation}
Critical points satisfy 
\begin{equation}
    x_i^{s-r}=\frac{\sum_kx_k^s}{\sum_jx_j^r}\,.
\end{equation}
Since the r.h.s.\ is independent of $i$, the only critical point (up to rescaling) is $\mathbf{1}=(1,\dots,1)$. The second derivative at this point is 
\begin{equation}
    \partial_i \partial_jf(x)\bigr\vert_{x=\mathbf{1}}=\frac{f(\mathbf{1})}{d}(r-s)\Bigl(\frac{1}{d}-\delta_{ij}\Bigr)\,.
\end{equation}
This matrix is easy to diagonalize: the spectrum has a zero eigenvalue corresponding to the scale symmetry, all the other eigenvalues are given by $(s-r)f(\mathbf{1})/d$. The critical point is a maximum as long as $r\ge s$;
otherwise, the maximum is obtained on the boundary of the domain, for $x_1=1$ and $x_{i\neq 1}=0$ (up to permutations). The final result is eq.\ \eqref{eq:normId}.
\end{proof}

\begin{lem}\label{lem:lowerBoundNorm}(MUB gives lower bound).
The norm of a generic $\c2$ is lower bounded by the norm of $\c2_{\text{\tiny MUB}}$ in the corresponding dimensions. 
\begin{equation}
    \bigl\Vert\c2\bigr\Vert_{r\rightarrow s}\ge \bigl\Vert\c2_{\text{\tiny MUB}}\bigr\Vert_{r\rightarrow s}\,.
\end{equation}
\end{lem}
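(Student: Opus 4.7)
The plan is to exhibit a single explicit test vector for which the ratio $\|\c2 x\|_s / \|x\|_r$ already equals $\|\c2_{\text{\tiny MUB}}\|_{r\to s}$. Since the norm is defined as a supremum over all $x \in \mathbb{R}^d_+$, any such choice immediately gives the desired lower bound. The natural candidate is the all-ones vector $\mathbf{1} = (1,1,\ldots,1)$, because doubly stochastic matrices act trivially on it.

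First I would recall that $\c2$ is doubly stochastic: $\sum_j \c2_{ij} = \sum_i \c2_{ij} = 1$ for every row and column. This comes from unitarity of the underlying $U$, since $\sum_j |U_{ij}|^2 = 1$ (unit row-norms) and likewise for columns. Consequently $(\c2 \mathbf{1})_i = \sum_j \c2_{ij} = 1$, so $\c2 \mathbf{1} = \mathbf{1}$. Then I compute the two $p$-norms directly: $\|\mathbf{1}\|_r = d^{1/r}$ and $\|\c2 \mathbf{1}\|_s = \|\mathbf{1}\|_s = d^{1/s}$. Using Lemma~\ref{lem:posVectors} to justify that we may test with positive real vectors, and plugging into the definition \eqref{eq:normDef},
\begin{equation}
\bigl\Vert \c2 \bigr\Vert_{r\to s} \;\geq\; \frac{\|\c2 \mathbf{1}\|_s}{\|\mathbf{1}\|_r} \;=\; \frac{d^{1/s}}{d^{1/r}} \;=\; d^{\,\frac{1}{s}-\frac{1}{r}}.
\end{equation}

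Finally I would invoke Lemma~\ref{lem:mubNorm}, which identifies the right-hand side with $\|\c2_{\text{\tiny MUB}}\|_{r\to s}$ in the relevant regime $r \geq 1$ (which is the case we care about, since in our applications $r = \alpha/\mu \geq 1$). This closes the argument.

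There is really no hard step here; the lemma is essentially the observation that $\mathbf{1}$ is a right eigenvector of every doubly stochastic matrix with eigenvalue one, and that this eigenvector is simultaneously the maximizer for the MUB matrix. The only place where one must be mildly careful is verifying that the supremum in \eqref{eq:normDef} may indeed be restricted to nonnegative real vectors so that the test vector $\mathbf{1}$ is admissible — but this is exactly the content of Lemma~\ref{lem:posVectors}. No inequality besides the trivial one (``sup $\geq$ value at a point'') is needed.
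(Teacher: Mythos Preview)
Your proof is correct and follows essentially the same approach as the paper: test the all-ones vector $\mathbf{1}$, use that $\c2\mathbf{1}=\mathbf{1}$ by double stochasticity, and identify the resulting ratio $d^{1/s-1/r}$ with $\bigl\Vert\c2_{\text{\tiny MUB}}\bigr\Vert_{r\to s}$ via Lemma~\ref{lem:mubNorm}. The paper's version is just the terse one-line display of this computation.
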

\begin{proof}
This is easy to see
\begin{equation}
\begin{split}
    \bigl\Vert\c2\bigr\Vert_{r\rightarrow s}&=\sup_{x\in \mathbb{R}^d_+}\frac{\norm{\c2 x}_s}{\norm{x}_r}\ge \frac{\norm{\c2\mathbf{1}}_s}{\norm{\mathbf{1}}_r}\\
    &=d^{\frac{1}{s}-\frac1r} =\bigl\Vert\c2_{\text{\tiny MUB}}\bigr\Vert_{r\rightarrow s}\,.
\end{split}
\end{equation}
where $\mathbf{1}=(1, 1, \dots, 1)$.
\end{proof}

\begin{lem}\label{lem:upperBoundNorm}($\mathds{1}$ gives upper bound).
The norm of a generic $\c2$ is upper bounded by the norm of the identity, $\mathds{1}$, in the corresponding dimensions,
\begin{equation}
    \bigl\Vert\c2\bigr\Vert_{r\rightarrow s}\le \norm{\mathds{1}}_{r\rightarrow s}\,.
\end{equation}
\end{lem}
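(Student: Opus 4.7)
The plan is to combine two standard ingredients: (i) the operator $\c2$ is a contraction in every $\ell^p$ norm on $\mathbb{C}^d$, and (ii) elementary comparisons between $\ell^p$ norms on a finite-dimensional space. Nothing here is delicate, so I do not expect a real obstacle; the only thing to be careful about is that the direction of the $\ell^p$-embedding inequality reverses between the cases $r \ge s$ and $r < s$, which is exactly what produces the piecewise form of $\lVert\mathds{1}\rVert_{r\rightarrow s}$ in Lemma~\ref{lem:idNorm}.

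First I would establish that $\bigl\Vert\c2\bigr\Vert_{p\rightarrow p}\le 1$ for every $p\in[1,\infty]$. The cleanest route is the Birkhoff--von Neumann theorem: since $U$ is unitary, the matrix $\c2$ with entries $|U_{ij}|^2$ is doubly stochastic, hence a convex combination $\c2=\sum_k \pi_k P_k$ of permutation matrices. Each $P_k$ is an isometry of $\ell^p$, so by the triangle inequality $\lVert\c2 x\rVert_p \le \sum_k \pi_k \lVert P_k x\rVert_p = \lVert x\rVert_p$. An alternative that avoids Birkhoff is to verify the endpoints directly ($\lVert\c2 x\rVert_1\le \lVert x\rVert_1$ from column stochasticity and $\lVert\c2 x\rVert_\infty\le \lVert x\rVert_\infty$ from row stochasticity) and invoke Riesz--Thorin interpolation.

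Next I would combine this with the standard $\ell^p$-comparison on $\mathbb{C}^d$: for any $y\in\mathbb{C}^d$, monotonicity in finite dimensions gives $\lVert y\rVert_s\le \lVert y\rVert_r$ when $s\ge r$, while a direct application of Hölder's inequality gives $\lVert y\rVert_s\le d^{1/s-1/r}\lVert y\rVert_r$ when $s\le r$. Applying these to $y=\c2 x$ and inserting the contraction bound of the previous step yields
\begin{equation*}
\bigl\Vert \c2 x\bigr\Vert_s \;\le\; \begin{cases} d^{1/s - 1/r}\,\lVert x\rVert_r, & r\ge s,\\[2pt] \lVert x\rVert_r, & r<s.\end{cases}
\end{equation*}
Dividing by $\lVert x\rVert_r$ and taking the supremum over $x$ reproduces exactly the expression for $\lVert\mathds{1}\rVert_{r\rightarrow s}$ established in Lemma~\ref{lem:idNorm}, completing the argument.
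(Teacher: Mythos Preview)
Your proof is correct and rests on the same ingredients as the paper's: the Birkhoff--von Neumann decomposition and the triangle inequality. The paper's version is marginally more direct: it applies the triangle inequality straight to the $r\to s$ operator norm, $\bigl\Vert\c2\bigr\Vert_{r\to s}\le\sum_\sigma p_\sigma\lVert\sigma\rVert_{r\to s}$, and then observes that every permutation $\sigma$ has $\lVert\sigma\rVert_{r\to s}=\lVert\mathds{1}\rVert_{r\to s}$ by relabelling invariance of $p$-norms. This avoids the two-step detour you take (first $p\to p$ contraction, then the $\ell^p$ embedding inequalities) and in particular does not need the explicit piecewise formula of Lemma~\ref{lem:idNorm} as input. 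Your route has the mild advantage of also offering the Riesz--Thorin alternative that bypasses Birkhoff entirely. Both arguments are equally valid.
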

\begin{proof}
In general, \c2 is a convex combination of permutation matrices, $\c2 = \sum_\sigma p_\sigma \sigma$, where $\{\sigma\}$ is the set of all permutation matrices of a given dimension, and $\sum_\sigma p_\sigma=1$. We can lower bound the norm as follows,
\begin{equation}
\begin{split}
    \bigl\Vert\c2\bigr\Vert_{r\rightarrow s}&\le \sum_\sigma p_\sigma \norm{\sigma}_{r\rightarrow s}\\
    &=\norm{\mathds{1}}_{r\rightarrow s}\,.
\end{split}
\end{equation}
In the first line, we have used the triangle inequality, and in going to the last line, we have used that the norm is invariant under permutations, and that $\sum_\sigma p_\sigma=1$. 
\end{proof}
Putting together Lemmas \ref{lem:lowerBoundNorm} and \ref{lem:upperBoundNorm}), we find that the norm of generic matrices \c2 is upper bounded by the norm for MUB, and lower bounded by the norm of the identity. In equation,
\begin{equation}
    \bigl\Vert\c2_{\text{\tiny MUB}}\bigr\Vert_{\frac{1}{\mu}\rightarrow \frac{1}{1-\lambda}}\le \bigl\Vert\c2\bigr\Vert_{\frac{1}{\mu}\rightarrow \frac{1}{1-\lambda}}\le \norm{\mathds{1}}_{\frac{1}{\mu}\rightarrow \frac{1}{1-\lambda}}\,.
\end{equation}
Notice that the quantity entering eq.\ \eqref{eq:basicineq} is $-\log\norm*{\c2}$, so smaller norms give stronger bounds. The result above is then intuitive: measuring in MUB gives the strongest constraint, measuring in the same basis gives the weakest constraint, anything else falls in between. Moreover, since for $\mu+\lambda\le 1$ the upper and lower bounds coincide, we have that in this regime the norm of any matrix \c2 is given by the MUB result, eq.\ \eqref{eq:normMUB}. Unfortunately, in this regime, the log of the norm is positive and the bound in eq.\ \eqref{eq:basicineq} is useless.

\begin{lem}\label{lem:normExact}(Norm for $s\le r$).
For $s\le r$, and for any matrix $\c2$, the norm is given by
\begin{equation}
    \log\bigl\Vert\c2\bigr\Vert_{r\rightarrow s}=\Bigl(\frac{1}{s}-\frac{1}{r}\Bigr)\log{d}\,, \quad s\le r\,.
\end{equation}
This is the same result as in the MUB case.
\end{lem}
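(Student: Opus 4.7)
The plan is to sandwich $\bigl\Vert\c2\bigr\Vert_{r\to s}$ between two explicit quantities that have already been computed in closed form, and to notice that in the regime $s\le r$ these quantities happen to coincide. Specifically, I would combine Lemma \ref{lem:lowerBoundNorm} (any doubly stochastic $\c2$ has norm at least that of $\c2_{\text{\tiny MUB}}$) with Lemma \ref{lem:upperBoundNorm} (any doubly stochastic $\c2$ has norm at most that of the identity $\mathds{1}_d$), giving
\begin{equation*}
    \bigl\Vert\c2_{\text{\tiny MUB}}\bigr\Vert_{r\to s} \;\le\; \bigl\Vert\c2\bigr\Vert_{r\to s} \;\le\; \bigl\Vert\mathds{1}_d\bigr\Vert_{r\to s}.
\end{equation*}

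Next, I would feed in the two closed-form evaluations already at hand. Lemma \ref{lem:mubNorm} yields $\log\bigl\Vert\c2_{\text{\tiny MUB}}\bigr\Vert_{r\to s}=(1/s - 1/r)\log d$ for all $r\ge 1$, which is within our standing range of parameters. Lemma \ref{lem:idNorm} gives $\log\bigl\Vert\mathds{1}_d\bigr\Vert_{r\to s}=(1/s-1/r)\log d$ under the extra condition $r\ge s$, i.e.\ exactly our hypothesis $s\le r$. The two endpoints of the sandwich therefore agree, forcing $\log\bigl\Vert\c2\bigr\Vert_{r\to s}=(1/s-1/r)\log d$ on the whole class of doubly stochastic matrices.

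There is essentially no obstacle here: all of the real work has already been absorbed into the auxiliary lemmas. The only conceptual point worth flagging is why the condition $s\le r$ is exactly what makes the two extremal norms coincide; this traces back to the phase transition in the maximizer for $\bigl\Vert\mathds{1}_d\bigr\Vert_{r\to s}$ inside Lemma \ref{lem:idNorm}, where the interior critical point $\mathbf{1}$ stops being a maximum for $r<s$ and the supremum jumps to a boundary vector of the form $e_i$. Outside the regime $s\le r$ this sandwich argument becomes too loose to pin down the norm, and one genuinely needs a new idea to go further — which is precisely the content of Conjecture \ref{conjecture}.
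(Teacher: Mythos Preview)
Your proposal is correct and matches the paper's proof essentially line for line: the paper also sandwiches $\bigl\Vert\c2\bigr\Vert_{r\to s}$ between $\bigl\Vert\c2_{\text{\tiny MUB}}\bigr\Vert_{r\to s}$ and $\bigl\Vert\mathds{1}_d\bigr\Vert_{r\to s}$ via Lemmas~\ref{lem:lowerBoundNorm} and~\ref{lem:upperBoundNorm}, and then observes from Lemma~\ref{lem:idNorm} that these coincide when $s\le r$.
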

\begin{proof}
In Lemmas \ref{lem:lowerBoundNorm} and \ref{lem:upperBoundNorm}, we have showed that the MUB case also gives a lower bound that the norm of a generic \c2 falls in between the MUB case and the identity case. Moreover, in Lemma \ref{lem:idNorm}, we have showed that the norm of $\mathds{1}$, for $s\le r$, is given by the MUB result. Combining these facts, we have 
\begin{equation}
    \bigl\Vert\c2_{\text{\tiny MUB}}\bigr\Vert_{r\rightarrow s}\le \bigl\Vert\c2\bigr\Vert_{r\rightarrow s}\le \bigl\Vert\c2_{\text{\tiny MUB}}\bigr\Vert_{r\rightarrow s}\,,\quad r\ge s\,.
\end{equation}
From which follows that $\norm*{\c2}_{r\rightarrow s}=\norm*{\c2_{\text{\tiny MUB}}}_{r\rightarrow s}$.
\end{proof}
Incidentally, the regime $s\le r$ is also the one in which the efficient numerical algorithm of \cite{bhaskara2011approximating} can be used to calculate the norms. To the knowledge of the authors, there is no known efficient algorithm to calculate these norms away from this regime.

\begin{lem}
(KMU limit).
Consider the regime in which $\mu=\lambda\rightarrow 1$, such that $s\rightarrow \infty$ and $r\rightarrow 1$.
In this limit the norm is given by 
\begin{equation}
    \bigl\Vert\c2\bigr\Vert_{1\rightarrow \infty}= \max_{ij}\c2_{ij}\,.
\end{equation}
Since this is the same quantity appearing in the bound of Kraus, Masseen, and Uffing, we will call this the KMU limit.
\end{lem}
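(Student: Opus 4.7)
The plan is to compute the norm $\Vert\c2\Vert_{1\to\infty}$ directly from the definition, since the endpoint norms $\Vert\cdot\Vert_1$ and $\Vert\cdot\Vert_\infty$ on $\mathbb{C}^d$ are well-defined and we do not actually need to take any limit carefully — we just evaluate the expression at $r=1$, $s=\infty$. By Lemma \ref{lem:posVectors} it suffices to optimize over nonnegative vectors, so
\begin{equation}
\bigl\Vert\c2\bigr\Vert_{1\to\infty}
= \sup_{x\in\mathbb{R}_+^d}
\frac{\max_i \sum_j \c2_{ij}\, x_j}{\sum_j x_j}\,.
\end{equation}

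For the upper bound I would note that for every index $i$ and every $x\in\mathbb{R}_+^d$,
\begin{equation}
\sum_j \c2_{ij}\, x_j \;\leq\; \Bigl(\max_{k,\ell} \c2_{k\ell}\Bigr) \sum_j x_j\,,
\end{equation}
so the numerator is bounded by $(\max_{ij}\c2_{ij})\,\Vert x\Vert_1$, giving $\Vert\c2\Vert_{1\to\infty}\leq \max_{ij}\c2_{ij}$. For the lower bound I would pick a pair $(i_*,j_*)$ that achieves $\c2_{i_*j_*}=\max_{ij}\c2_{ij}$ and test the supremum on the standard basis vector $x=e_{j_*}$: then $\Vert x\Vert_1=1$ and $\Vert \c2 x\Vert_\infty = \max_i \c2_{i j_*} \geq \c2_{i_*j_*} = \max_{ij}\c2_{ij}$, which matches the upper bound and proves equality.

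If instead one wishes to obtain the claim as a genuine limit of the norms appearing in \eqref{eq:basicineq} as $\mu=\lambda\to 1$, the plan is to bound the finite-$s$ quantity from both sides: upper bound $\Vert \c2 x\Vert_s \leq d^{1/s} \Vert \c2 x\Vert_\infty$, and observe that testing on $x=e_{j_*}$ still shows $\Vert \c2\Vert_{r\to s}\geq (\max_{ij}\c2_{ij})/\Vert e_{j_*}\Vert_r = \max_{ij}\c2_{ij}$. Sending $s\to\infty$ and $r\to 1$ squeezes the norm to $\max_{ij}\c2_{ij}$.

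I do not expect any real obstacles here: the argument is essentially the standard identification of the operator norm of a nonnegative matrix from $\ell^1$ to $\ell^\infty$ as its largest entry. The only subtlety is the initial reduction to positive test vectors, which is exactly what Lemma \ref{lem:posVectors} grants, and which is essential because $\c2$ itself has nonnegative entries but the argument would fail for a general matrix if one allowed the phases in $\phi_j=x_j e^{i\theta_j}$ to interfere destructively.
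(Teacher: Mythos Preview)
Your argument is correct and is the standard computation of the $\ell^1\to\ell^\infty$ operator norm of a nonnegative matrix. The paper in fact states this lemma without proof, so there is no alternative approach to compare against; your two-sided bound via $e_{j_*}$ and the entrywise estimate is exactly what one would supply, and the invocation of Lemma~\ref{lem:posVectors} is the right justification for restricting to nonnegative test vectors.
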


\begin{lem}
(Monotonicity of the norm).
The norm is monotonically decreasing as we increase $\mu$ or $\lambda$.
\end{lem}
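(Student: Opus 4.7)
The plan is to reduce monotonicity in $(\mu,\lambda)$ to the elementary fact that $p \mapsto \norm{x}_p$ is non-increasing for a fixed vector $x \in \mathbb{C}^d$, and then feed this in separately through the numerator and denominator of the Rayleigh quotient defining $\norm{\c2}_{r\rightarrow s}$.

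First I would recall the standard inequality $\norm{x}_q \le \norm{x}_p$ whenever $1 \le p \le q$ and $x \in \mathbb{C}^d$: rescaling so that $\norm{x}_p = 1$ forces $|x_i| \le 1$ and hence $|x_i|^q \le |x_i|^p$, and summing gives $\norm{x}_q \le 1 = \norm{x}_p$. This is pointwise in $x$ and requires no structural assumption on $\c2$.

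Next I would apply this inequality twice inside the supremum. Fixing $r$ and letting $s_1 \le s_2$, applying the inequality with $x = \c2 \phi$ yields $\norm{\c2 \phi}_{s_2} \le \norm{\c2 \phi}_{s_1}$ for every $\phi$, so dividing by $\norm{\phi}_r$ and taking a supremum gives $\norm{\c2}_{r\rightarrow s_2} \le \norm{\c2}_{r\rightarrow s_1}$; thus the operator norm is non-increasing in $s$. Symmetrically, fixing $s$ and letting $r_1 \le r_2$, applying the inequality with $x = \phi$ gives $\norm{\phi}_{r_2} \le \norm{\phi}_{r_1}$, so the reciprocals flip and $\norm{\c2 \phi}_s / \norm{\phi}_{r_2} \ge \norm{\c2 \phi}_s / \norm{\phi}_{r_1}$; taking a supremum gives $\norm{\c2}_{r_2\rightarrow s} \ge \norm{\c2}_{r_1\rightarrow s}$, so the operator norm is non-decreasing in $r$.

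Finally I would translate these conclusions back through the parameterization $r = \alpha/\mu$ and $s = \alpha/(\alpha-\lambda)$ from Theorem~\ref{thm:bound}. Increasing $\mu$ at fixed $\alpha$ decreases $r$, so the norm decreases; increasing $\lambda$ at fixed $\alpha$ decreases $\alpha-\lambda$ and hence increases $s$, so again the norm decreases. The hypothesis $0 \le \lambda,\mu \le \alpha \le 1$ guarantees $r,s \ge 1$ throughout, so the elementary $p$-norm comparison is always in its valid range. The argument is essentially bookkeeping; the only point worth flagging is keeping straight which of the two substitutions $r(\mu)$ and $s(\lambda)$ reverses the direction of monotonicity.
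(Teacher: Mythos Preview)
Your proof is correct and follows exactly the same idea as the paper's: the paper's one-line argument simply cites the monotonicity of the vector $p$-norm, $\norm{x}_p \ge \norm{x}_q$ for $p < q$, and you have spelled out precisely how this feeds through the numerator and denominator of the Rayleigh quotient and then through the parameterization $r = \alpha/\mu$, $s = \alpha/(\alpha-\lambda)$. There is nothing to add.
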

\begin{proof}
This follows from the monotonicity property of the $p$-norm: $\norm{x}_p\ge \norm{x}_q$ if $p<q$ for $x\in \mathbb{R}^n$.
\end{proof}

\section{Evidences for conjecture \ref{conjecture}}\label{app:conjecture}
From Lemma \ref{lem:posVectors}, we know that to find the norm it is enough to optimize over vectors with positive entries, $x\in\mathbb{R}_+^d$. The norm is given by
\begin{equation}
    \bigl\Vert\c2\bigr\Vert_{r\rightarrow s}=\sup_{x\in \mathbb{R}^d_+}\frac{\bigl[\sum_k(\sum_l c_{kl}x_l)^s\bigr]^{1/s}}{\bigl[\sum_k x_k^r\bigr]^{1/r}}\,,
\end{equation}
where the matrix elements of \c2 satisfy $\sum_ic_{ij}=1$ and $\sum_jc_{ij}=1$.\footnote{Notice that this is true for \textit{any} doubly stochastic matrix. In our case, we have a little more: $c_{ij}=\abs{U_{ij}}^2$, for some unitary $U$. So far, we haven't managed to use this fact to simplify or extend the result above.} 
Let $f(x)$ be the quantity inside the sup. Our goal is to find the global maximum of this function.

The first piece of evidences for conjecture \ref{conjecture} is the following lemma.
\begin{lem}\label{lem:localMax}
(MUB regime in $d>2$).
The vector $\mathbf{1}\equiv(1,1,\dots, 1)$ is a local maximum of the function 
\begin{equation}
    f(x)=\frac{\bigl[\sum_k(\sum_l c_{kl}x_l)^s\bigr]^{1/s}}{\bigl[\sum_k x_k^r\bigr]^{1/r}}\,,
\end{equation}
with $\sum_ic_{ij}=1$ and $\sum_jc_{ij}=1$, as long as 
\begin{equation}\label{eq:condLocMax}
    \frac{1-\mu}{\mu}\frac{1-\lambda}{\lambda}\ge\sigma_2^2\,,
\end{equation}
where $\sigma_2$ is the second largest singular value of \c2.
\end{lem}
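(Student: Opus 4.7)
The function $f$ is homogeneous of degree zero, $f(kx)=f(x)$ for $k>0$, so $g\coloneqq\log f=\log N-\log D$ is constant along the scaling direction $\mathrm{span}(\mathbf{1})$, and it is enough to exhibit $\mathbf{1}$ as a local maximum of $g$ on the transverse complement. My plan is a straightforward first/second-order analysis of $g$ at $x=\mathbf{1}$. Setting $a_k(x)\coloneqq\sum_l c_{kl}x_l$, one has
\begin{equation*}
\partial_m\log N=\frac{\sum_k c_{km}\,a_k(x)^{s-1}}{\sum_k a_k(x)^{s}},\qquad \partial_m\log D=\frac{x_m^{r-1}}{\sum_k x_k^{r}}.
\end{equation*}
At $x=\mathbf{1}$ the row-sum condition gives $a_k(\mathbf{1})=1$, and the column-sum condition gives $\sum_k c_{km}=1$, so both expressions collapse to $1/d$ and $\nabla g(\mathbf{1})=0$, confirming that $\mathbf{1}$ is a critical point of $g$.

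The central step is the Hessian computation. A direct second differentiation, using $a_k(\mathbf{1})=1$ once more, produces
\begin{equation*}
H\coloneqq\mathrm{Hess}\,g\,(\mathbf{1})=\frac{s-1}{d}\,C^{T}C-\frac{r-1}{d}\,I+\frac{r-s}{d^{2}}\,J,
\end{equation*}
where $J=\mathbf{1}\mathbf{1}^{T}$; the three summands arise, respectively, from the cross derivative of $\log N$ (producing $C^{T}C$), the diagonal second derivative of $\log D$ (producing $I$), and the outer product of first derivatives of $g$ (producing $J$). Double stochasticity gives $C^{T}C\,\mathbf{1}=\mathbf{1}$, and together with $I\mathbf{1}=\mathbf{1}$ and $J\mathbf{1}=d\mathbf{1}$ this shows that all three matrices respect the orthogonal split $\mathbb{R}^{d}=\mathrm{span}(\mathbf{1})\oplus\mathbf{1}^{\perp}$. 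A direct substitution yields $H\mathbf{1}=0$, the null direction dictated by scale invariance, while on $\mathbf{1}^{\perp}$ the $J$ term drops out and the eigenvalues of $H|_{\mathbf{1}^{\perp}}$ are $\bigl((s-1)\sigma_i^{2}-(r-1)\bigr)/d$ for $i=2,\dots,d$.

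Negative semidefiniteness of $H|_{\mathbf{1}^{\perp}}$ is then equivalent to $(s-1)\sigma_{2}^{2}\le r-1$, and plugging in the identifications $r=1/\mu$ and $s=1/(1-\lambda)$ used throughout the paper gives $(r-1)/(s-1)=\tfrac{1-\mu}{\mu}\cdot\tfrac{1-\lambda}{\lambda}$, which is exactly the stated criterion \eqref{eq:condLocMax}. The only real work lies in the Hessian computation: the main pitfall is carefully bookkeeping the $J$ contribution coming from the outer product of first derivatives, and checking the three-term cancellation along $\mathrm{span}(\mathbf{1})$. Once that is in hand, the simultaneous diagonalization of $C^{T}C$, $I$, and $J$ with respect to $\mathrm{span}(\mathbf{1})\oplus\mathbf{1}^{\perp}$ makes the final condition transparent. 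Under strict inequality in \eqref{eq:condLocMax} the Hessian on $\mathbf{1}^{\perp}$ is negative definite and $\mathbf{1}$ is a strict local maximum; the boundary case $\sigma_{2}^{2}=(r-1)/(s-1)$ gives a degenerate critical point to be handled by a brief continuity argument, consistent with the non-strict inequality in the lemma.
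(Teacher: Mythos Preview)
Your proof is correct and follows essentially the same approach as the paper: verify that $\mathbf{1}$ is a critical point, compute the Hessian there, and read off the eigenvalues on $\mathbf{1}^{\perp}$ in terms of the singular values of $C^{(2)}$. The only difference is cosmetic---you work with $g=\log f$ whereas the paper differentiates $f$ directly---and since at a critical point $\mathrm{Hess}\,g=\mathrm{Hess}\,f/f(\mathbf{1})$, the two computations agree up to the positive factor $d^{1/s-1/r}$.

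One small slip in your narrative: the $J$ contribution does \emph{not} come from ``the outer product of first derivatives of $g$''---those vanish at $\mathbf{1}$. It arises instead from the quotient-rule pieces in the second derivatives of $\log N$ and $\log D$ separately (equivalently, from the outer products of $\nabla\log N$ and $\nabla\log D$, each equal to $\mathbf{1}/d$, which combine to $(r-s)J/d^{2}$ after subtraction). The formula you state is nonetheless correct, and the rest of the argument goes through unchanged.
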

\begin{proof}
 The derivative of $f(x)$ is given by 
\begin{equation}
    \partial_i f(x) = f(x)\Bigl[\frac{\sum_k(\sum_l c_{kl}x_l)^{s-1}c_{ki}}{\sum_k(\sum_l c_{kl}x_l)^s}-\frac{x_i^{r-1}}{\sum_k x_k^r} \Bigr]\,.
\end{equation}
Finding all the zeros of this function is hard: if this was not the case, we could compute the norm easily. However, notice that we know one $x$ such that $\partial_i f(x)=0$. This is the point invariant under all permutations $ \mathbf{1}\equiv(1,1,\dots, 1)$. To check if this is at least a local maximum, we need to study the second derivative, which is given by
\begin{equation}
\begin{split}
    \partial_{i}\partial_j f(x) = \frac{\partial_{i} f(x)\partial_{j} f(x)}{f(x)}+ f(x)\Bigl[
        &(s-1)\frac{\sum_k(\sum_l c_{kl}x_l)^{s-2}c_{ki}c_{kj}}{\sum_k(\sum_l c_{kl}x_l)^s}\\
        &- s\frac{\sum_k(\sum_l c_{kl}x_l)^{s-1}c_{ki}\sum_{\tilde{k}}(\sum_{\tilde{l}} c_{\tilde{k}\tilde{l}}x_{\tilde{l}})^{s-1}c_{\tilde{k}j}}{[\sum_k(\sum_l c_{kl}x_l)^s]^2}\\
        &+r \frac{x_i^{r-1}x_j^{r-1}}{(\sum_k x_k^r)^2}
        -(r-1)\frac{\delta_{ij}x_i^{r-2}}{\sum_k x_k^r} \Bigr]\,.
\end{split}
\end{equation}
This expression considerably simplifies for $x=\mathbf{1}$,
\begin{equation}    
\partial_{i}\partial_j f(x)\Bigr\vert_{x=\mathbf{1}} = d^{\frac{1}{s}-\frac{1}{r}-1}\Bigl[\frac{r-s}{d}+(s-1)\sum_kc_{ki}c_{kj}-(r-1)\delta_{ij}\Bigr]\,,
\end{equation}
As long as all the eigenvalues of this matrix are negative, $x=\mathbf{1}$ is a local maximum. 

Setting $r=1/\mu$ and $s=1/(1-\lambda)$, and rearranging the terms we obtain 
\begin{equation}    
\partial_{i}\partial_j f(x)\Bigr\vert_{x=\mathbf{1}} = -\frac{d^{-\mu-\lambda}}{\mu(1-\lambda)}\bigl[(1-\mu)(1-\lambda)(\delta_{ij}-1)-\mu\lambda\bigl(\sum_kc_{ki}c_{kj}-1\bigr)\bigr]\,,
\end{equation}
Since the factor in front of the parenthesis is always negative, we might as well consider the Hamiltonian 
\begin{equation}    
H = (1-\mu)(1-\lambda)(\mathds{1}-M_1)-\mu\lambda(\c2_T\c2-M_1)\,,
\end{equation}
where $\c2_T$ denotes the transpose of \c2, and $M_1$ is the matrix with all entries equal to $1/d$. When the spectrum of this Hamiltonian is positive, $x=\mathbf{1}$ is a local maximum of $f(x)$.
 
For $\mu\lambda=0$, we are left with $\mathds{1}-M_1$, which has spectrum $(0, 1, 1, \dots, 1)$. The zero eigenvalue corresponds to $v=\mathbf{1}$. In fact, this is an eigenvector of $H$ with zero eigenvalue for any value of $\mu$ and $\lambda$: this flat direction corresponds to the freedom of rescaling $x$ by a constant $x\rightarrow x+\lambda x$, and plays no role. As we change $\mu$ and $\lambda$, the eigenvalues change: as soon as one of them crosses zero, $\mathbf{1}$ is not anymore a local maximum. To understand when this happens, it is convenient to work in the space orthogonal to $\mathbf{1}$, such that we don't need to worry about the zero eigenvalue. On this subspace the matrix reduces to 
\begin{equation}
    H' = (1-\mu)(1-\lambda)\mathds{1}-\mu\lambda\c2_T\c2 \,,
\end{equation}
because $M_1=\mathbf{1}\cdot \mathbf{1}^T$ is identically zero in the space orthogonal to $\mathbf{1}$.

This matrix $H'$ is diagonal in the basis in which $\c2_T\c2$ is diagonal, its eigenvalues are given by
\begin{equation}
    \lambda_k = (1-\mu)(1-\lambda)-\mu\lambda\sigma_k^2 \,, \quad k=2,\dots, d\,,
\end{equation}
where $\sigma_k^2$ are the eigenvalues of $\c2_T\c2$, i.e.\ the singular values of \c2, ordered such that $\sigma_1\ge \sigma_2 \ge \dots \ge \sigma_d$. The Birkhoff-von Neumann theorem implies that these are less or equal than 1, $\sigma_i\le 1$. The largest eigenvalue, $\sigma_1^2=1$ corresponds to $\mathbf{1}$, and it is not included in the subspace we are working in. This is why we only consider $k\ge 2$. Clearly the first $\lambda_k$ to become negative, is the second. So the spectrum of $H'$ and hence of $H$ is certainly positive as long as $\lambda_2\ge0$. This concludes our proof.
\end{proof}

Notice that since $\sigma_2^2\le 1$, the condition \eqref{eq:condLocMax} is certainly satisfied for $\mu+\lambda\le1$. In this regime we know from Lemma \ref{lem:normExact} that the norm of any \c2 is given by the MUB result. This implicitly implies that $\mathbf{1}$ is also a global maximum. For $\mu+\lambda>1$, we don't have a general argument why $\mathbf{1}$ should be also the global maximum. This is why in the main text, we have only a conjecture and not a theorem. 

In $d=2$, we can plot the function $f$ and numerically check that the conjecture is correct. The most general $2\times 2$ \c2 is given by \eqref{eq:2dDoublyStoc}. The norm is given by
\begin{equation}
    \bigl\Vert\c2\bigr\Vert_{\frac1\mu\rightarrow \frac{1}{1-\lambda}}=\sup_{x\in \mathbb{R}^2_+}\frac{[(x_0\cos^2\theta+x_1\sin^2\theta)^\frac{1}{1-\lambda}+(x_0\cos^2\theta+x_1\sin^2\theta)^\frac{1}{1-\lambda}]^{1-\lambda}}{[x_0^\frac{1}{\mu}+x_1^{\frac1\mu}]^\mu}\,.
\end{equation}
Since the argument of the sup is invariant under $x_0\leftrightarrow x_1$, we can assume w.l.o.g.\ that $x_0\neq 0$. Then we can rewrite the expression as 
\begin{equation}
    \bigl\Vert\c2\bigr\Vert_{\frac1\mu\rightarrow \frac{1}{1-\lambda}}=\cos^2\theta\cdot\sup_{z>0}\frac{[(1+z\tan^2\theta)^\frac{1}{1-\lambda}+(z+\tan^2\theta)^\frac{1}{1-\lambda}]^{1-\lambda}}{[1+z^{\frac1\mu}]^\mu}\,,
\end{equation}
where $z=x_1/x_0$. Let $f(z)$ be the argument of the sup. The symmetry under $x_0\leftrightarrow x_1$ becomes an inversion symmetry, i.e.\ $f(z)=f(1/z)$. Therefore, it is enough to consider $z\in[0,1]$. We want to check that $\mathbf{1}$, which corresponds to $z=1$, is a global maximum as long as \eqref{eq:condLocMax} is satisfied. We consider the case $\mu=\lambda$; the critical value at which $z=1$ ceases to be a local maximum is
\begin{equation}
    \mu_* = \frac{1}{2\cos^2{\theta}}\,.
\end{equation}
In Fig.\ \ref{fig:2dConj}, we plot $f(z)/f(1)$ for $\theta=\pi/8$ as we vary $\mu$ around $\mu_*$. We see that $z=1$ is a global maximum for $\mu\le\mu_*$. Crossed the critical value the function develops two new global maxima (the one you see in the figure, and its mirror under $z\rightarrow 1/z$). In the bottom plot, we zoom in on $\mu_*$. We observe that the maximum develops at $z=1$ and smoothly move away as we increase $\mu$. 
\begin{figure}
    \centering
\includegraphics[width=0.45\textwidth]{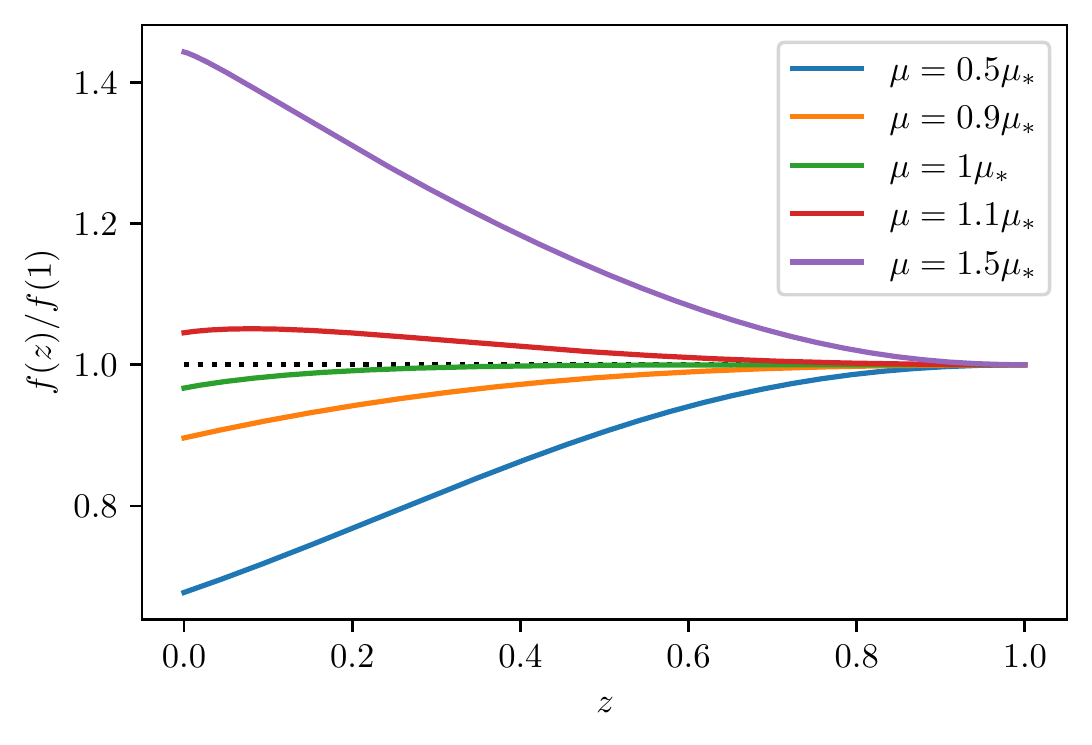}
    \includegraphics[width=0.45\textwidth]{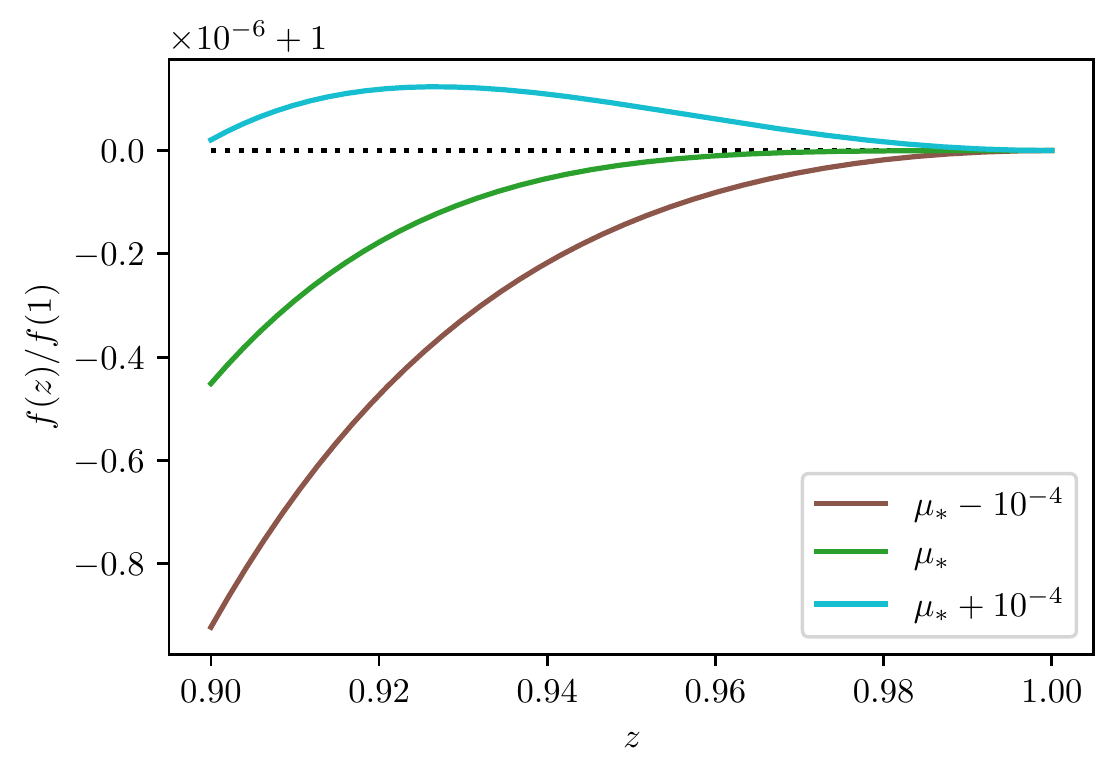}
    \caption{Plot of $f(z)/f(1)$ for \c2 given by \eqref{eq:2dDoublyStoc} and $\theta=\pi/8$ and for various values of $\mu$. In the bottom figure we zoom in on $\mu_*$.}
    \label{fig:2dConj}
\end{figure}
We suspect that a similar story holds in higher dimensions, but we have so far failed to prove this. The best we can say that numerically checking the entropic inequalities resulting from the conjecture we have not found any violation. 

\section{Additional calculations}\label{app:calculations}
We explain here some calculations that we have skipped for brevity in the main text.

\subsection{Proof of eq.\ \eqref{eq:randomConj}.}
Using \eqref{eq:normMUB} in eq.\ \eqref{eq:randomness}, we find
\begin{equation}
    H(X\vert E)\ge \max_{\mu, \lambda} \mu \Delta_Y -(1-\lambda)\Delta_X\,,
\end{equation}
where as in the main test $\Delta_X\equiv \log{d}-H(X)$, $\Delta_Y\equiv\log{d}-H(Y)$. Both $\Delta_X$ and $\Delta_Y$ are non-negative, so we would like to make $\mu$, $\lambda$ as big as possible. For MUB the expression above is always true and we can take $\mu=\lambda=1$, which leads to $H(X\vert E)\ge \log d-H(Y)$. For other measurements, conjecture \ref{conjecture} states that the MUB result is valid as long condition \eqref{eq:conjecture} is satisfied. This prevents us from increasing $\mu$, $\lambda$ indipendently. Since the allowed region for $\mu$, $\lambda$ is convex, we know that the maximum is obtained its boundary. Here, we can find $\mu$ as a function $\lambda$
\begin{equation}
    \mu_{\partial}=\frac{1-\lambda}{1-\lambda(1-\sigma_2^2)}\,.
\end{equation}

Plugging this in our bound for $H(X\vert E)$, we find 
\begin{equation}
     H(X\vert E)\ge \Delta_Y\max_{\lambda} \frac{1-\lambda}{1-\lambda(1-\sigma_2^2)}-(1-\lambda)\gamma^2\,,
\end{equation}
where $\gamma=\sqrt{\Delta_X/\Delta_Y}$. Wlog we can assume that $\gamma\le 1$, if this is not the case flip the roles of $X$ and $Y$. The function we want to maximize, which we denote with $f(\lambda)$, is positive for $\lambda\in[0,1]$, provided $\gamma\le1$. Moreover, we have $f(0)=1-\gamma^2$ and $f(1)=0$. The function has critical points at 
\begin{equation}
    \lambda_+=\frac{\gamma+\sigma_2}{\gamma(1-\sigma_2^2)}\,,\quad  \lambda_-=\frac{\gamma-\sigma_2}{\gamma(1-\sigma_2^2)}\,.
\end{equation}
For every value of $\gamma, \sigma_2$, $\lambda_+\ge 1$; while $\lambda_-\in[0,1]$ if $\gamma\ge\sigma_2$. We conclude that if $\gamma\ge\sigma_2$, the function is maximized by $\lambda=\lambda_-$; if $\gamma<\sigma_2$, the function is maximized at $\lambda=0$. Plugging these values in the expression, for $f(\lambda)$, we arrive to eq.\ \eqref{eq:randomConj}. Notice that for MUB, $\sigma_2=0$ and we recover the result we found at the beginning of this subsection. 

\subsection{Proof of \eqref{eq:entDet2}}
The proof follows \cite{alberto}. First, we use the fact that conditional Shannon entropies are concave, to show that for separable states, $\rho_{AB}=\sum_i p_i \rho_i^A\otimes \rho_i^B$, we have
\begin{equation}
    \lambda H(X_{AB})+\mu H(Y_{AB}) \ge \sum_i p_i (\lambda H(X_A)_{\rho_i^A}+\mu H(Y_A)_{\rho_i^A})+\lambda H(X_B)+\mu H(Y_B)\,.
\end{equation}
Then we apply eq.\ \eqref{eq:basicineq} to $A$ and $B$ separately, 
\begin{equation}
    \lambda H(X_{AB})+\mu H(Y_{AB}) \ge \sum_i p_iS(\rho_i^A)+S(\rho_B)-\log\bigl\Vert\c2_A\bigr\Vert_{\frac{1}{\mu}\rightarrow\frac{1}{1-\lambda}}-\log\bigl\Vert\c2_B\bigr\Vert_{\frac{1}{\mu}\rightarrow\frac{1}{1-\lambda}}\,.
\end{equation}
Finally one uses that $\sum_i p_iS(\rho_i^A)\ge0$ to get to eq.\ \eqref{eq:entDet2}. Notice that we can repeat the same proof flipping the roles of $A$ and $B$, so we can replace $S(\rho_B)$ with $\max(S(\rho_A), S(\rho_B))$. 

Notice that by directly applying \eqref{eq:basicineq} to $AB$, we obtain 
\begin{equation}
    \lambda H(X_{AB})+\mu H(Y_{AB}) \ge S(\rho_{AB})-\log\bigl\Vert\c2_{AB}\bigr\Vert_{\frac{1}{\mu}\rightarrow\frac{1}{1-\lambda}}\,.
\end{equation}
So the first bound can be violated by some entangled state only if 
\begin{equation}
    S(\rho_{AB})-\log\bigl\Vert\c2_{AB}\bigr\Vert_{\frac{1}{\mu}\rightarrow\frac{1}{1-\lambda}} \le S(\rho_B)-\log\bigl\Vert\c2_A\bigr\Vert_{\frac{1}{\mu}\rightarrow\frac{1}{1-\lambda}}-\log\bigl\Vert\c2_B\bigr\Vert_{\frac{1}{\mu}\rightarrow\frac{1}{1-\lambda}}\,,
\end{equation}
for some choice of measurements, and state $\rho_{AB}$. It is easy to show that $\log\norm*{\c2_{AB}} \ge \log\norm*{\c2_A}+\log\norm*{\c2_B}$, so it is sufficient for $S(\rho_{AB})\le S(\rho_B)\,.$ This is true, for example, for entangled pure states.

\subsection{Proof of eq.\ \eqref{eq:optimalMuLambda}.}
We want to maximize $\lambda \Delta_X+\mu \Delta_Y$ under the constraint \eqref{eq:conjecture} and $\mu,\lambda\in[0,1]$. The allowed region for $\mu, \lambda$ is convex, so the maximum will be on its boundary. Here we can find $\lambda$ as a function of $\mu$,
\begin{equation}
    \lambda_{\partial}=\frac{1-\mu}{1-\mu(1-\sigma_2^2)}\,.
\end{equation}
Let $f(\mu)=\lambda_\partial \Delta_X+\mu \Delta_Y$. The first derivative is given by
\begin{equation}
    f'(\mu)=\Delta_Y-\frac{\sigma_2^2}{[1-\mu(1-\sigma_2^2)]^2}\Delta_X\,.
\end{equation}
This is zero for 
\begin{equation}
    \mu_\pm = \frac{1\pm \sigma_2\gamma}{1-\sigma_2^2}\,.
\end{equation}

The second derivative is negative at $\mu_-$, and positive at $\mu_+$, for every value of $\gamma$. Therefore, the function $f(\mu)$ grows until $\mu_-$, decays from $\mu_-$ until $\mu_+$, and then grows again. Notice that $\mu_+\ge1$ for every value of $\gamma$. For $\gamma<\sigma_2$, also $\mu_-\ge1$ and the best we can do is to set $\mu=1$, for which $\lambda=0$. For $\sigma_2\le\gamma\le1/\sigma_2$, $\mu_-\in [0,1]$ and the maximum is given by $\mu_-$ itself. For $\gamma>1/\sigma_2$, $\mu_-<0$ and the best we can is to set $\mu=0$, for which $\lambda=1$. 

\subsection{Details about Fig.\ \ref{fig:werner}.}
To generate the figure, we take both $X_A$ and $X_B$ to be measurements in the computational basis. While we take $Y_A$ and $Y_B$ to be measurements in bases obtained with the following rotations
\begin{equation}\label{eq:orthogonalBasis}
    U_I = \begin{pmatrix}
        \cos\theta_I & \sin\theta_I\\
        \sin\theta_I & \cos\theta_I
    \end{pmatrix}\,, \quad I=A,B\,.
\end{equation}
We check whether \eqref{eq:entWitness}, for this class of operators, is able to detect entangled Werner states for various values of $\Phi<0$. The results are displayed in Fig.\ \ref{fig:werner}. The values of $\theta_A$ and $\theta_B$ for which entanglement is detected are shaded in green.
\end{document}